\newif\ifnotes\notestrue
\definecolor{mygrey}{gray}{0.50}
\newcommand{\notename}[2]{{\textcolor{mygrey}{\footnotesize{\bf (#1:} {#2}{\bf ) }}}}
\newcommand{\pnote}[1]{{\endnote{#1}}}
\newcommand{\notename}[2]{{}}
\newcommand{\pnote}[1]{}
\begin{document}
\title{\LARGE{From communication complexity to an entanglement spread area law in the ground state of gapped local Hamiltonians}
}
\author{Anurag Anshu\thanks{Institute for Quantum Computing and Department of Combinatorics and Optimization, University of Waterloo, Canada and Perimeter Institute for Theoretical Physics, Canada. \href{mailto:aanshu@uwaterloo.ca}{aanshu@uwaterloo.ca}} \and Aram W. Harrow\thanks{Center for Theoretical Physics, MIT. \href{mailto:aram@mit.edu}{aram@mit.edu}}
 \and 
 Mehdi Soleimanifar\thanks{Center for Theoretical Physics, MIT. \href{mailto:mehdis@mit.edu}{mehdis@mit.edu}} }

\date{\today}

\maketitle
 \begin{abstract}
 In this work, we make a connection between two seemingly different problems. The first problem involves characterizing the properties of entanglement in the ground state of gapped local Hamiltonians, which is a central topic in quantum many-body physics. The second problem is on the quantum communication complexity of testing bipartite states with EPR assistance, a well-known question in quantum information theory. We construct a communication protocol for testing (or measuring) the ground state and use its communication complexity to reveal a new structural property for the ground state entanglement. This property, known as the entanglement spread, roughly measures the log of the ratio between the largest and the smallest Schmidt coefficients across a bipartite cut in the ground state. Our main result shows that gapped ground states possess limited entanglement spread across any cut, exhibiting an ``area law'' behavior. 
 
 Our result quite generally applies to any interaction graph with an improved bound for the special case of lattices. This entanglement spread area law includes interaction graphs constructed in \cite{Aharonov2014CounterExampleToAreaLaw} that violate a generalized area law for the entanglement \emph{entropy}. Our construction also provides evidence for a conjecture in physics by Li and Haldane on the entanglement spectrum of lattice Hamiltonians \cite{LiHaldane:2008}.  
 
 On the technical side, we use recent advances in Hamiltonian simulation algorithms along with the quantum phase estimation to give a new construction for an approximate ground space projector (AGSP) over arbitrary interaction graphs, which might be of independent interest.
 \end{abstract}
%%%%%%%%%%%%%%%%%%%%%%%%%%%%%%%%%%%%%%%%%%%%%%%%%%%%%%%%%%%%%%%%%%%%%%%%%%%%%%%%%%%%%%%%%%%%%%%%%%%
\newpage
\section{Introduction}
\subsection{Background on area law and entanglement spectra}
The ground states of local Hamiltonians are examples of quantum many-body states with central significance in condensed matter physics and quantum chemistry. A crucial distinction between these states and their classical counterparts -- the satisfying assignments in constraint satisfaction problems -- is the presence of multipartite entanglement. This leads to novel phenomena in these systems such as exotic phases of matter, but also complicates the theoretical and numerical study of their properties.

There is a successful line of research that applies the tools developed in quantum information theory and computer science to study various features of entanglement in the ground states. An important problem that has been the focus of many such studies is proving a conjecture known as the ``area law'' for the entanglement entropy in the ground state of gapped local Hamiltonians. We can more precisely state this by considering the interaction (hyper)graph where each vertex represents a qudit and the edges correspond to the interaction terms in the Hamiltonian (see \fig{1}). Suppose we fix a partition of the qudits into two parts $A$ and $B$. We denote the ground state by $\ket{\Omega}_{AB}$. In general, the qudits in part $A$ will be entangled with those in part $B$. The area law asserts that the amount of entanglement -- measured by the entropy of either of the reduced states $\Omega_A$ or $\Omega_B$ -- is at most proportional to the number of interaction terms that cross the cut $\partial A$. This behavior is drastically different from the generic situation where the entanglement across the cut $\partial A$ scales with the size of the smaller partition $|A|$ rather than $|\partial A|$. Thus, loosely speaking, the area law implies that the ground state entanglement is local and limited to the boundary. This conjecture has been rigorously proven when the interaction graph is a 1D chain \cite{hastings2007area_law,one-dimensional_area_law,area_law_subexponential_algo} and there has been recent progress on trees \cite{abrahamsen2019polynomial} and 2D lattices \cite{anshu2019AreaLaw2D}. 

A generalization of this conjecture asks if the area law holds for arbitrary interaction graphs beyond lattices. It turns out that this generalized conjecture is false. Using quantum expanders, an interaction graph is constructed in \cite{Aharonov2014CounterExampleToAreaLaw} which admits a partition into two parts $A$ and $B$ such that the size of the cut is $|\partial A|=1$, but the amount of entanglement across the cut is proportional to $|A|$. 

Thus far, these results study the ground state entanglement in terms of the \emph{entropy} of the reduced state $\Omega_A$ on partition $A$. One can go beyond this and consider other features of the eigenvalues of the reduced state $\Omega_A$ -- the log of which is known as the \emph{entanglement spectrum} -- besides its entropy. Most notably, in \cite{LiHaldane:2008}, Li and Haldane conjectured that the entanglement spectrum of 2D gapped ground states in a region $A$ resembles the spectrum of the Gibbs state of a local Hamiltonian acting only on the boundary $\partial A$. This Hamiltonian is often called the \emph{modular Hamiltonian} $\modH$ and the Gibbs state is the state proportional to $e^{-\modH}$. This conjecture, which is stronger that the area law discussed before, has been extensively studied both numerically and theoretically in several works \cite{Schuch_boundary_hamiltonian_2013, Cirac_boundary_hamiltonian_2011, KatoBrandao_edge_state2019}.

Inspired by these results, we prove a new structural property for the entanglement of gapped ground states. The key to our findings is a connection to the field of quantum communication complexity. A basic question there is when two parties want to \emph{test} whether they share a specific entangled state $\ket\psi$ by exchanging as few messages as possible.  In other words, they want to perform the measurement $\{\ketbra{\psi}{\psi},\iden-\ketbra{\psi}{\psi}\}$. Building on \cite{state_conversion_coudron}, we resolve the communication complexity of this problem and relate it to the details of the entanglement in $\ket\psi$. To apply this to gapped Hamiltonians, we choose $\ket\psi$ to be be the ground state $\ket{\Omega}$ of a local Hamiltonian. Then designing a testing protocol for $\ket{\Omega}$ tells us about the nature of the ground state entanglement. We devise such a measurement protocol tailored for the ground state of \emph{gapped} local Hamiltonians by combining recent Hamiltonian simulation techniques with the quantum phase estimation algorithm, which might be of independent interest.

The property that we study is known as the \emph{entanglement spread}, which roughly measures the log of the ratio between the largest and the smallest eigenvalue of the reduced state $\Omega_A$, giving an estimate of how spread out their distribution is (see \fig{1}). Our results quite generally apply to \emph{any} interaction graph, with some improved statement for the special case of lattices. We show that as long as the Hamiltonian is gapped, its ground state possesses limited entanglement spread on general interaction graphs, exhibiting an ``area law'' behavior. On lattices, we prove a sub-area scaling for this quantity. We use these results to give formal evidence for the aforementioned conjecture by Li and Haldane about modular Hamiltonians. We also show that both states that satisfy the entropy area law and those in the counter-example construction in \cite{Aharonov2014CounterExampleToAreaLaw} fit into our framework. In the next sections, we provide a more detailed overview of our setup and results. 

 \begin{figure}[t!]
 \centering
 	\includegraphics[width=.75
 \textwidth]{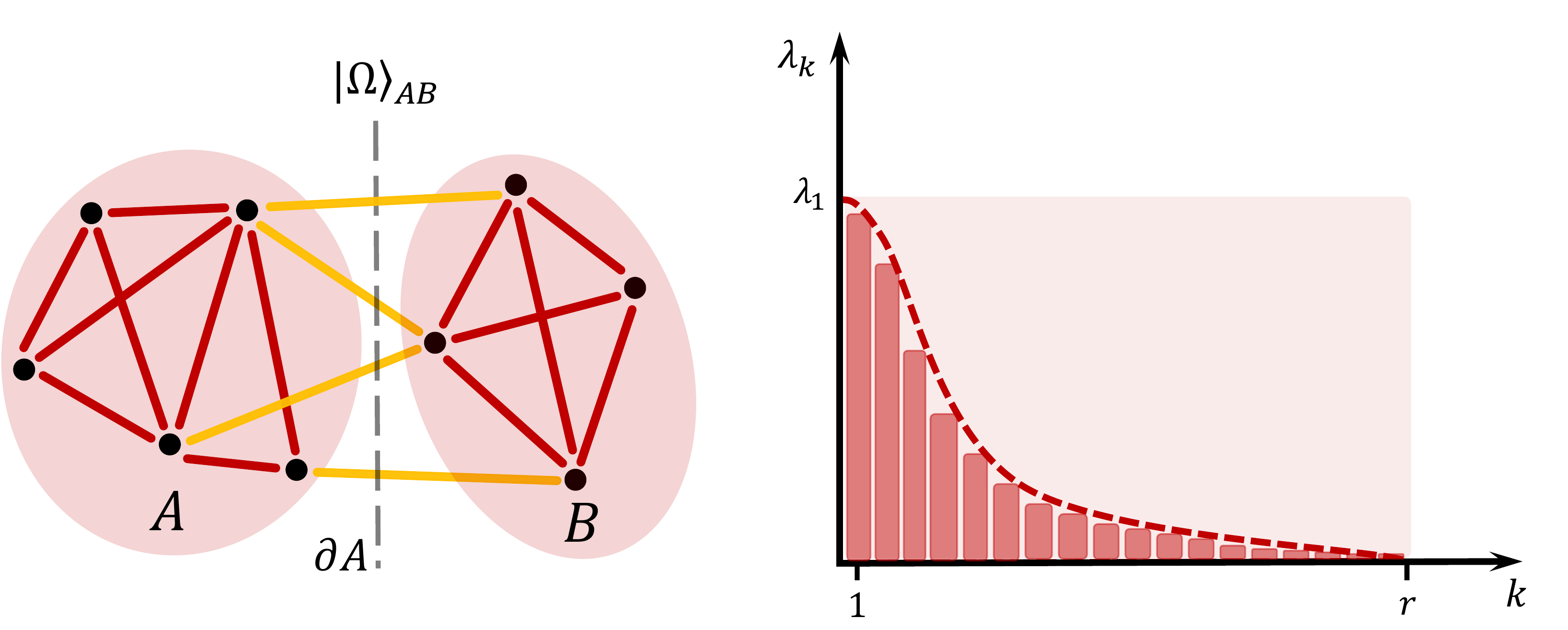}
 	\caption{(Left) Consider a partition of the ground state into two parts. What does the entanglement across a cut look like? (Right) The profile of the eigenvalues (Schmidt coefficients) of the reduced state in region $A$. The entanglement spread across the cut $\partial A$ is roughly the log of the ratio between the largest and smallest Schmidt coefficient. We show that the entanglement spread scales as $ O(|\partial A|)$, while on a lattice, we improve this to $O(\sqrt{|\partial A|})$. 
 	}\label{fig:1}
\end{figure}

\subsection{Entanglement spread and communication complexity of non-local measurements}
Consider a bipartite state $\ket{\psi}\in \cH_{A}\ot \cH_B$, where Alice and Bob own registers, $\cH_A$ and $\cH_B$ respectively. Suppose the parties engage in a communication protocol whose goal is to test if they share the state $\ket{\psi}$. That is, they would like to implement the reflection operator $\Refl(\psi)=2\ketbra{\psi}{\psi}-\iden$ or similarly, perform the two-outcome measurement  $\{\ketbra{\psi}{\psi},\iden-\ketbra{\psi}{\psi}\}$. In our setup, these tasks are locally interchangeable. Namely, the ability to perform controlled reflections will give us the ability to do measurements and vice versa \footnote{To see this, just locally initialize a qubit in state $\ket{+}$, perform a controlled reflection, and then locally measure in the Hadamard basis. The ability to do a coherent measurement also gives the power of reflection: We can add a $-1$ phase if the outcome of the measurement is $\iden - \ketbra{\psi}{\psi}$.}. Since in general, the state $\ket{\psi}$ is an entangled state, Alice and Bob need to exchange qubits to perform this operation. For instance, Alice can send her register $\cH_A$ to Bob who then performs the joint operation on $\cH_A \ot \cH_B$ and sends back Alice's register. As we will see later, they can often do much better. The communication complexity (or cost) of such a protocol $\CC_{\D}(\psi)$ is defined as the \emph{minimum} number of qubits that the parties need to exchange to perform this task with error at most $\D$. Here, we are interested in the case where $\ket{\psi}$ is the ground state of a gapped local Hamiltonian. In other words, we ask

\begin{quote}
    \begin{center}
        \emph{What is the communication cost $\mathit{C_{\D}(\Omega)}$ of approximately measuring or reflecting about the ground state of a gapped local Hamiltonian which is shared between Alice and Bob? }
    \end{center}
\end{quote}

Before specializing to the ground state, it is insightful to consider a few general instances. 
\begin{itemize}
\item[1)] $\ket{\psi}=\ket{0}^{\ot n}\ket{0}^{\ot n}$: This is equivalent to a CZ gate, or equivalently a CNOT gate, which has communication cost 1.
\item[2)] $\ket{\psi}=\frac{1}{\sqrt{p}}\sum_{j=1}^p \ket{j}\ket{j}$: Initially, one might think that reflecting about the maximally entangled state $\ket{\Phi_p}=\frac{1}{\sqrt{p}}\sum_{j=1}^p \ket{j}\ket{j}$ requires exchanging a large number of qubits, but it turns out that by using quantum expanders \cite{Aharonov2014CounterExampleToAreaLaw}, one can perform such a reflection up to error $\D$ by exchanging $\CC_{\D}(\psi)=O(\log(1/\D))$ qubits, which is independent of the dimension $p$. 
\item[3)] $\ket{\psi}=\frac{1}{\sqrt{2}}(\ket{00}^{\ot n}+\ket{\Phi_2}^{\ot n})$: This is a superposition of the last two cases with $\ket{\Phi_2}=\frac{1}{\sqrt{2}}(\ket{00}+\ket{11})$ being the EPR pair. We claim that $\CC_{\D}(\psi)=\Theta(n)$. This can be verified by noticing that $\norm{\Refl(\psi)\ket{00}^{\ot n}-\ket{\Phi_2}^{\ot n}}\leq 2^{-(n-1)/2}$, but it is well-known that creating $n$ EPR pairs $\ket{\Phi_2}^{\ot n}$ from product states requires $\Omega(n)$ qubits of communication; see also \cite{HL07}.
\end{itemize}

In general, by applying local unitaries, a bipartite state $\ket{\psi}$ can be always transformed to a standard form $\ket{\psi}=\sum_{j=1}^r \sqrt{\l}_j \ket{j}\ket{j}$ known as the Schmidt decomposition. Thus, we expect $\CC_{\D}(\psi)$ to depend only on the Schmidt coefficients $\l_j$. We assume these coefficients are arranged in the descending order $\l_1\geq \l_2\geq \dots \geq \l_r$, where $r$ is called the Schmidt rank of $\ket{\psi}$. A closer look at the above examples reveals a pattern. In the first two instances, where $\CC_{\D}(\psi)$ is small, the Schmidt coefficients of $\ket{\psi}$ are all equal (which we refer to as the \emph{concentrated} case). In the third example, which has high communication cost, the Schmidt coefficients are \emph{spread out} between two different values $\frac{1}{2}(1+\frac{1}{\sqrt{2^{n}}})^2 \approx 1/2$ and $\frac{1}{2^{n+1}}$ with almost equal weights. 

This motivates a more general lower bound on the communication complexity in terms of the \emph{entanglement spread} of the state $\ket{\psi}$, which is a measure of how spread out the Schmidt coefficients are across a cut \cite{cost_entanglement_transformations_Hayden}.
In its simplest form, the entanglement spread, denoted by $\ES({\psi})$, is defined by 
\begin{align}\label{eq:e4}
 \ES({\psi})= \log(r \l_1)=S_{\max}(\psi_A)-S_{\min}(\psi_A),
\end{align}
where $\psi_A=\Tr_B\ketbra{\psi}{\psi}$ and
\begin{align}
    S_{\max}(\psi_A)=\log r, \quad S_{\min}(\psi_A)=-\log \l_1 \label{eq:e6}
\end{align} 
are the max- and min-entropies given in terms of the Schmidt rank $r$ and the maximum Schmidt coefficient $\l_1$ (see \fig{1}). Indeed, one can verify that in the first two examples, $\ES({\psi})=0$ while in the third case, $\ES({\psi})=\Theta(n)$. Thus, $\ES(\psi)$ distinguishes between the concentrated versus spread out cases in the examples above. It might also be helpful to consider the entanglement spread roughly as $\log(\l_1/\l_r)$, the log of the ratio between the largest and smallest Schmidt coefficients. This makes the connection between the spread of the spectrum of $\psi$ and $\ES(\psi)$ clearer.

In our results, we need a more robust version of the entanglement spread \eqref{eq:e4} that applies to protocols which only \emph{approximately} implement the two-outcome ground state measurement, i.e. when the error $\D>0$. We denote this version by $\ES_{\d}(\psi)$ and following \cite{cost_entanglement_transformations_Hayden}, we define:
\begin{definition}[Entanglement spread]
Let $\d\in[0,1)$. The ($\d$-smooth) entanglement spread of a bipartite state $\ket{\psi}\in \cH_{AB}$ is defined by
 \begin{align}
    \ES_{\d}(\psi)=S_{\max}^{\d}(\psi_A)-S_{\min}^\d(\psi_A),\nn
\end{align}
where $S_{\min}^{\d}(\psi_A)$ and $S_{\max}^{\d}(\psi_A)$ are the smooth min- and max- entropies defined similar to \eqref{eq:e6} after removing up to a mass $\d$ from the Schmidt distribution of $\ket{\psi}$ (see \defref{smooth ent} in the body for details).  
\end{definition}

But besides the above examples, why does the entanglement spread provide a lower bound on the communication cost of measuring (testing) a \emph{general} state $\ket{\psi}$? In the exact case $\D=0$, this can be seen by observing that for each qubit exchanged between Alice and Bob, the rank $r$ and the largest Schmidt coefficient $\l_1$ change at most by a factor of $2$ and hence, after $c$ rounds of communication, $\log(r\l_1)$ is at most $2c$ \cite{cost_entanglement_transformations_Hayden}. This shows that (modulo a constant) $\ES(\psi)$ provides a lower bound on the exact communication complexity. 

In the approximate regime $\D>0$, similar lower bounds in terms of $\ES_{\d}(\psi)$ have been proved before \cite{cost_entanglement_transformations_Hayden, state_conversion_coudron}. We derive an analogous lower bound tailored for when the state of interest is the ground state of a gapped local Hamiltonian. Before stating our bound, we explain the setup in more detail.

A very useful property of the entanglement spread is that it remains unchanged when the state is supplemented by arbitrary numbers of EPR pairs. That is, $\ES_{\d}({\psi}\ot {\Phi_p})=\ES_{\d}({\psi})$ for any maximally entangled state of arbitrary size $p$. One way of seeing this is that adding $\ket{\Phi_p}$ multiplies $r$ by $2^p$ and divides $\l_1$ by $2^p$, leaving the entanglement spread unchanged. One interesting implication of this equality is that the lower bound on the communication complexity in terms of $\ES_{\d}(\psi)$ continues to hold even in protocols where Alice and Bob share an arbitrary number of EPR pairs during their communication. As we will see later, this improves and simplifies our analysis. We denote the shared EPR pairs collectively by $\ket{\Phi}$ and call such protocols \emph{EPR-assisted}. 

\section{Our results}
\subsection{Lower bound on communication complexity from entanglement spread}
As described in the previous section, the goal of Alice and Bob is to implement an operator $K$ that acts jointly on an input state and the shared EPR pairs $\ket{\Phi}$ and approximately projects the input state onto the ground state $\ket{\Omega}$ while leaving  $\ket{\Phi}$ untouched. More precisely, we define:
\begin{definition}[EPR-assisted AGSP]\label{def:ERP AGSP informal}
An EPR-assisted Approximate Ground State Projector (EPR-assisted AGSP for short) associated with the ground state $\ket{\Omega}$ of a local Hamiltonian is an operator $K$ that for some error $\D\in (0,1)$ satisfies
\begin{align}
\norm{(K-\iden \otimes \ketbra{\Omega}{\Omega})\ \ket{\Phi} \ket{\psi}} \leq \Delta \quad \text{for all}\quad  \ket{\psi}\in \cH_A \ot \cH_B .\nn
\end{align}
\end{definition}
We often also equivalently write $K(\ket{\Phi}\otimes \iden)\approx_\Delta \ket{\Phi}\otimes\ketbra{\Omega}{\Omega}$, where the notation $A\approx_\Delta B$ means that $\norm{A-B}\leq \Delta$.  In our first result, we give a lower bound on the communication complexity of implementing this operator in terms of the entanglement spread of the ground state. 

\begin{thm}[Lower bound on the complexity of EPR-assisted AGSP]\label{thm:upperbound on spread infromal}
 Let $\ket{\Omega}\in \cH_A \ot \cH_B$ be the ground state of a local Hamiltonian shared between Alice and Bob. For any error $\D\in (0,1)$, the communication complexity of implementing the two-outcome measurement $\{K,\iden-K\}$ where $K$ is the EPR-assisted AGSP corresponding to $\ket{\Omega}$ is lower bounded by 
 \begin{align}
     \CC_{\D}(\Omega)&\geq \ES_{2\cdot(2\D)^{2/3}}(\psi)-1= S_{\max}^{2\cdot(2\D)^{2/3}}(\Omega_A)-S_{\min}^{2\cdot(2\D)^{2/3}}(\Omega_A)-1.
 \end{align}
 \end{thm}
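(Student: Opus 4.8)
The plan is to convert the measurement protocol into a bounded-communication \emph{state conversion} and then exploit the fact that entanglement spread changes slowly under communication, adapting the lower-bound technique of \cite{cost_entanglement_transformations_Hayden, state_conversion_coudron} to the ground-state setting. Suppose we are given an EPR-assisted protocol implementing $\{K,\iden-K\}$ with $c=\CC_{\D}(\Omega)$ qubits of communication. The key observation is that a single run of the ``yes'' branch prepares the ground state from a low-spread seed: feeding $\ket{\Phi}\ot\ket{s}$ into the protocol and conditioning on the outcome corresponding to $K$ produces, since $K(\ket{\Phi}\ot\iden)\approx_{\Delta}\ket{\Phi}\ot\ketbra{\Omega}{\Omega}$, a post-measurement state $\propto K\ket{\Phi}\ket{s}\approx \langle\Omega|s\rangle\,\ket{\Phi}\ot\ket{\Omega}$. (By the footnote this ``yes'' branch is exactly the controlled-reflection-plus-Hadamard construction, so one may equivalently think of it as applying $\iden_{\Phi}\ot\Refl(\Omega)$.) I would take $\ket{s}$ to be the maximally entangled state supported on a chosen subset of the Schmidt vectors of $\Omega_A$; such a seed has $\ES_{\d}(s)=0$ and can be prepared for free by local rotations of the shared EPR register, while its overlap $\langle\Omega|s\rangle$ with the ground state is controlled by the retained Schmidt mass.

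Next I would track the Schmidt data of the global pure state through the protocol. Local operations leave the $A{:}B$ Schmidt spectrum invariant, and each exchanged qubit changes the Schmidt rank $r$ and the largest Schmidt coefficient $\l_1$ by at most a factor of $2$; because the entanglement spread is invariant under adjoining EPR pairs, this per-qubit accounting is unaffected by the (possibly large) entanglement of the catalyst, so only the total communication enters and the max- and min-entropies can be controlled together. The smooth-entropy version of this statement is precisely the content of \cite{cost_entanglement_transformations_Hayden}, which I would cite rather than reprove. Applying it to the conversion from $\ket{s}$ (spread $0$) to the post-selected state close to $\ket{\Omega}$ then gives $S_{\max}^{\d}(\Omega_A)-S_{\min}^{\d}(\Omega_A)\le c+O(1)$, and a careful per-qubit accounting yields the explicit additive constant, which rearranges to the claimed bound $\CC_{\D}(\Omega)\ge \ES_{\d}(\Omega)-1$.

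The hard part, and the source of the specific smoothing parameter $2\cdot(2\Delta)^{2/3}$, is that the conversion is only \emph{conditional} and only $\Delta$-accurate. The ``yes'' branch occurs with probability $q=|\langle\Omega|s\rangle|^{2}$, which is as small as $\l_{1}$ exactly when the spread is large, and after renormalization the post-selected state deviates from $\ket{\Omega}$ by roughly $\Delta/\sqrt{q}$. I would therefore run the spread bound on the normalized post-selected state and absorb both the post-selection error and the discarded tails of the Schmidt distribution into the smoothing. Here the seed support is tuned so that the relevant retained mass is of order $\d$, making the deviation scale like $\Delta/\sqrt{\d}$; balancing this error against the discarded mass $\d$ gives $\d^{3/2}\sim\Delta$, i.e. $\d\sim\Delta^{2/3}$, which reproduces the parameter $2\cdot(2\Delta)^{2/3}$ (the extra factors of $2$ coming from the normalization of $\Refl$ and from a union over the two entropies). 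Making this trade-off rigorous, i.e. simultaneously certifying the smooth $S_{\max}^{\d}$ and $S_{\min}^{\d}$ of the post-selected state and verifying that removing mass $2\cdot(2\Delta)^{2/3}$ suffices, is the technical heart of the argument; the per-qubit monotonicity and the measurement-to-preparation reduction are comparatively routine.
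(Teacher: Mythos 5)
Your high-level plan --- feed a low-spread, low-rank seed into the protocol, post-select on the ``yes'' outcome to obtain something close to $\ket{\Phi}\ot\ket{\Omega}$, and balance the post-selection error $\D/\sqrt{\d}$ against the discarded mass $\d$ to arrive at $\d\sim\D^{2/3}$ --- matches the shape of the paper's proof (\thmref{bound ES using AGSP}), and you have correctly located the origin of the exponent $2/3$. However, two of your concrete choices would fail. First, the seed: you take $\ket{s}$ to be \emph{maximally entangled} on a subset $L$ of Schmidt vectors and assert that its overlap with $\ket{\Omega}$ is controlled by the retained mass. It is not: $\braket{\Omega}{s}=|L|^{-1/2}\sum_{i\in L}\sqrt{\l_i}$, and Cauchy--Schwarz gives only the \emph{upper} bound $|\braket{\Omega}{s}|^2\le\sum_{i\in L}\l_i$; the overlap can be as small as $\sqrt{(\sum_{i\in L}\l_i)/|L|}$, i.e.\ exponentially small in $S_{\min}(\Omega_A)$ (take one large $\l_i$ and many tiny ones in $L$). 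Then the success probability $q$ is not of order $\d$ and your error term $\D/\sqrt{q}$ blows up. The paper instead uses the renormalized truncation $\ket{\Omega_{\text{heavy}}}\propto\sum_{i<b}\sqrt{\l_i}\ket{i}\ket{i}$, whose overlap with $\ket{\Omega}$ is exactly $\sqrt{\e'}$ by construction; the only property of the seed actually needed is its small Schmidt \emph{rank} $b-1\le 2\e/\l_b$, which is what ties the final bound to $S_{\min}^{2\e}(\Omega_A)$.

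Second, the accounting. The per-qubit monotonicity you invoke gives $\ES\le 2c$, not $c+O(1)$: in the EPR-assisted setting each transmitted qubit can raise $S_{\max}$ by $1$ \emph{and} lower $S_{\min}$ by $1$, and the catalyst already has huge $S_{\max}$, so you cannot shortcut via $\log(r\l_1)\le\log r$. Worse, post-selection is not a communication step, and under it $\l_1$ can grow by the factor $1/q$, so $S_{\min}$ drops by an uncontrolled $\log(1/q)$. The paper sidesteps both problems by never tracking $S_{\min}$ through the protocol: it bounds only the Schmidt \emph{rank} of the output $K\ket{\Phi}\ket{\Omega_{\text{heavy}}}$ by $p(b-1)D$ with $D=2^c$, and converts the $\D/\sqrt{\e'}$-closeness to $\ket{\Phi}\ket{\Omega}$ into $S_{\max}^{4\D/\sqrt{\e}}(\Omega_A)\le\log D+\log(b-1)$ via the Young--Eckart theorem (\lemref{Young-Eckart}), using crucially that $\Phi_A$ has a flat spectrum so that the top $p(b-1)D$ eigenvalues of $\Omega_A\ot\Phi_A$ sum to $\sum_{i\le(b-1)D}\l_i$. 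The min-entropy enters only through $\log(b-1)\le 1+\log\e+S_{\min}^{2\e}(\Omega_A)$, which is also where the additive ``$+1$'' (hence the ``$-1$'' in the theorem) comes from. To repair your argument you would need to (i) replace the maximally entangled seed by the truncated ground state and (ii) replace the spread-monotonicity step by this rank-plus-fidelity argument.
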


We note that the above theorem applies to any state $\ket{\Omega}$, as long as an approximate projection operator $K$ (similar to Definition \ref{def:ERP AGSP informal}) exists. But we keep the ``ground state'' terminology in our discussion, to fit the context.

\subsection{Communication protocol for approximate ground space projection}\label{sec:Communication protocol for approximate ground space projection}
 
In \thmref{upperbound on spread infromal}, we stated a lower bound on $\CC_{\D}(\Omega)$, the communication complexity of approximately measuring the ground state. In this section, we design a communication protocol that implements such a measurement and provides us with an \emph{upper bound} on the communication cost $\CC_{\D}(\Omega)$.  

Let $|A|$ be total number of qudits on Alice's side. Alice and Bob can trivially implement $K$ by exchanging $|A|$ qudits. Although, as we saw before, this bound can be tight for some states like $\frac{1}{\sqrt{2}}(\ket{00}^{\ot |A|}+\ket{\Phi}^{\ot |A|})$, our result shows that when the input state is the ground state of a gapped Hamiltonian, the communication complexity can be improved to $O(|\partial A|)$, where $|\partial A|$ is the number of terms in the Hamiltonian that act on both Alice and Bob's registers, see \fig{1}.

\begin{thm}[Communication protocol for projecting onto the ground space]
\label{thm:reflectaroundgs_informal}
Suppose the state $\ket{\Omega}$ is the ground state of a local Hamiltonian with spectral gap $\g$ (See \secref{prelim} for a formal definition of ``local Hamiltonian.''). Let $|\partial A|$ be the number of terms in the Hamiltonian that acts on both Alice and Bob's qudits. Then, there exists a protocol that implements the measurement $\left\{K,\iden-K\right\}$, where $K$ is an EPR-assisted AGSP satisfying \begin{align}
\norm{K(\ket{\Phi}\otimes \iden)-\ket{\Phi}\otimes\ketbra{\Omega}{\Omega}}\leq \Delta\nn
\end{align}
and has the communication cost
\begin{align}
c=O\left(\frac{|\partial A|}{\g} \log\left(\frac{|\partial A|}{\g \D}
\log \frac{1}{\D}\right) \log \frac{1}{\D}\right)\label{eq:t110}
\end{align}
\end{thm}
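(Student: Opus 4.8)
I will realize the AGSP $K$ as a coherent implementation of a spectral filter $f(H)$ with $f(E_0)=1$ and $|f(E)|\le\D$ for $E\ge E_0+\g$, so that $f(H)\approx \ketbra{\Omega}{\Omega}$ (the projector onto the ground energy $E_0$). The plan is to build this filter by running a \emph{coherent} quantum phase estimation on $H$ that resolves the energy to precision $\g/2$ with failure probability $\D$, conditionally marking the ``energy $=E_0$'' outcome, and then uncomputing the phase register so that the net operation acts as the near-projector on the input while leaving the shared EPR register $\ket{\Phi}$ untouched, as required by \defref{def:ERP AGSP informal}. Since $E_0$ is not known in advance, I rely on phase estimation measuring it directly. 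Using robust (median-amplified) phase estimation, resolution $\g$ at confidence $1-\D$ requires total Hamiltonian evolution time $T=O\!\big(\tfrac1\g\log\tfrac1\D\big)$, so the whole problem reduces to simulating $e^{-iHt}$ for total time $T$ with as little cross-cut communication as possible.

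The key step is to choose a simulation method that confines \emph{all} nonlocality to a small number of cheap operations. I would write $H=H_0+V$ with $H_0=H_A+H_B$ and $V=H_{\partial A}=\sum_{e\in\partial A}h_e$, and simulate in the interaction picture with respect to $H_0$ (the truncated-Dyson-series / interaction-picture algorithm of Low and Wiebe). The crucial observation is that $H_0$ \emph{splits across the cut}, so the free evolution $e^{-iH_0 s}=e^{-iH_A s}\ot e^{-iH_B s}$ is applied locally by Alice and Bob at zero communication cost; the algorithm only ever queries a block-encoding of the \emph{static, geometrically local} interaction $V$, interleaved with these free evolutions. Since $\norm{V}_1=\sum_{e}\norm{h_e}=O(|\partial A|)$, the number of interaction queries is, up to polylogarithmic factors, $N=O(\norm{V}_1\,T)=O\!\big(\tfrac{|\partial A|}{\g}\log\tfrac1\D\big)$.

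It remains to bound the communication per interaction query. The SELECT operation applies, in superposition over the index $e$, a single term $h_e$ supported on $O(1)$ qudits on each side of the cut; controlled on $e$, Alice teleports her $O(1)$ qudits of $h_e$ to Bob using the free shared EPR pairs, Bob applies $h_e$ and teleports them back. To keep the accumulated error below $\D$ over $N$ queries each block-encoding must be accurate to $O(\D/N)$, so the index and ancilla registers crossing the cut carry $O\!\big(\log(N/\D)\big)$ bits; hence the per-query communication is $O\!\big(\log(N/\D)\big)=O\!\big(\log(\tfrac{|\partial A|}{\g\D}\log\tfrac1\D)\big)$. Multiplying by the $N=O(\tfrac{|\partial A|}{\g}\log\tfrac1\D)$ queries reproduces exactly the cost in \eqref{eq:t110}. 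The overall error is then a triangle-inequality bookkeeping: the phase-estimation failure, the Dyson-series truncation, and the accumulated per-query block-encoding errors are each driven below a constant fraction of $\D$, and it is precisely setting the per-query precision to $O(\D/N)$ that produces the logarithmic factor.

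The step I expect to be the main obstacle is the communication accounting inside the interaction-picture simulation: one must verify that the algorithm genuinely isolates all nonlocality into queries of the static $V$ (so the free $H_0$-evolution is truly local and free), and that SELECT can be implemented by teleporting only the $O(1)$ qudits of the \emph{selected} term per query rather than all $O(|\partial A|)$ boundary qudits at once --- otherwise the cost would blow up to $O(|\partial A|^2/\g)$. A secondary subtlety is assembling a clean, coherent operator $K$ rather than a destructive measurement: one needs careful uncomputation of the phase register and of all teleportation/Dyson ancillas so that $\ket{\Phi}$ is returned intact and the final operator meets the AGSP bound of \defref{def:ERP AGSP informal}.
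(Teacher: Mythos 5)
Your high-level route is the same as the paper's: quantum phase estimation on $H$ (median/repetition-amplified $O(\log(1/\Delta))$ times) sitting on top of an interaction-picture simulation of $e^{-iHt}$, in which the free evolution $e^{-is(H_A+H_B)}$ is applied locally at zero cost and all communication is confined to the time-ordered propagator generated by $H_I(s)=e^{is(H_A+H_B)}H_{\partial A}e^{-is(H_A+H_B)}$, with $L=O(t|\partial A|)$ segments coming from $\norm{H_{\partial A}}_1=O(|\partial A|)$. The genuine gap is in your per-query communication accounting for the SELECT step, and it is exactly the obstacle you flag without resolving. ``Controlled on $e$, Alice teleports her $O(1)$ qudits of $h_e$'' is not a well-defined operation when $e$ is in superposition: which physical qudits are to be teleported depends on the superposed index, and $h_e$ is not unitary, so Bob could not apply it directly after a teleportation in any case. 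More quantitatively, if the Dyson-series index/ancilla register does cross the cut, its size is not $O(\log(N/\Delta))$: an order-$K$ truncation over $M$ time points and $J$ boundary terms needs $\Theta\bigl(K(\log M+\log J)\bigr)$ qubits, and $M$ must grow with $\norm{H_A+H_B}$ (the discretization error of the time-ordered integral is controlled by $\norm{[H_A+H_B,H_{\partial A}]}$), so shipping this register per query would introduce a system-size dependence that is absent from \eqref{eq:t110}.

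The paper's resolution is to arrange that the index register never crosses the cut. One decomposes $H_{\partial A}=\sum_j\beta_j\, u_j^{(A)}\ot u_j^{(B)}$ into tensor products of \emph{local unitaries}, so that $\SEL=\SEL_A\ot\SEL_B$ factorizes into controlled-unitaries which Alice and Bob apply locally on their respective halves of a shared maximally entangled index register; the LCU coefficients are rounded to a common unit and each term repeated accordingly, so the PREPARE state is exactly maximally entangled and one may even take the truncation parameters $M,K\to\infty$ without affecting the communication cost. The only nonlocal primitive left is the reflection $2\Phi_p-\iden$ needed for oblivious amplitude amplification, which is implemented with quantum expanders at cost $O(\log(1/\epsilon))$ \emph{independent of the dimension $p$} (\thmref{protocol for reflection around the MES}). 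This is what makes the per-segment cost $O(\log(L/\epsilon))$ and yields \eqref{eq:t110} with no hidden $\log\norm{H}$ or $\log n$ factors. Your final formula is numerically the right one, but it rests on a SELECT implementation that does not work as stated; replacing it with the local-product SELECT plus the expander-based reflection closes the gap.
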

As a result of \thmref{upperbound on spread infromal} in the previous section, we know that the communication complexity of performing an AGSP gives us information about the distribution of the Schmidt coefficients in the ground state. When combined with the bound \eqref{eq:t110}, this establishes an ``area law'' for the entanglement spread, meaning that across a given cut in the ground state of a gapped Hamiltonian, the Schmidt coefficients can be spread out at most proportional to the size of the cut.
\begin{cor}[Area law for entanglement spread]\label{cor:Area bound on entanglement spread informal}
Under the assumptions of \thmref{upperbound on spread infromal} and  \thmref{reflectaroundgs_informal}, the following bound on the entanglement spread of the ground state $\ket{\Omega}$ of a gapped local Hamiltonian holds,
\begin{align}
          \ES_{\d}(\Omega)\leq O\left(\frac{|\partial A|}{\g}\cdot \log\left(\frac{|\partial A|}{\g \d^{3/2}}\log \frac{1}{\d^{3/2}}\right)\cdot \log \frac{1}{\d^{3/2}}\right)\label{eq:e1}
\end{align}
\end{cor}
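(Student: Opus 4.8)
The plan is to obtain the corollary by sandwiching the communication complexity $\CC_{\D}(\Omega)$ between the lower bound of \thmref{upperbound on spread infromal} and the upper bound of \thmref{reflectaroundgs_informal}, and then matching the smoothing parameters. The protocol constructed in \thmref{reflectaroundgs_informal} is itself a valid implementation of the two-outcome measurement $\{K,\iden-K\}$ at error $\D$ with cost $c$ given by \eqref{eq:t110}, so by definition $\CC_{\D}(\Omega)\le c$. On the other hand, \thmref{upperbound on spread infromal} lower bounds the very same quantity by $\ES_{2\cdot(2\D)^{2/3}}(\Omega)-1$. Chaining these, $\ES_{2\cdot(2\D)^{2/3}}(\Omega)-1\le \CC_{\D}(\Omega)\le c$, eliminates $\CC_{\D}(\Omega)$ and leaves $\ES_{2\cdot(2\D)^{2/3}}(\Omega)\le c+1$, an upper bound on the entanglement spread at the specific smoothing level $2\cdot(2\D)^{2/3}$ that the lower bound forces on us.

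The second step is to reparametrize this forced smoothing level in terms of the free parameter $\d$ appearing in the statement. Setting $\d=2\cdot(2\D)^{2/3}$ and inverting gives $\D=\tfrac12(\d/2)^{3/2}$, so that $\D=\Theta(\d^{3/2})$ and correspondingly $1/\D=\Theta(\d^{-3/2})$ and $\log(1/\D)=\Theta(\log(1/\d^{3/2}))$. I would then substitute $\D=\Theta(\d^{3/2})$ into the cost \eqref{eq:t110}: each $1/\D$ becomes $\Theta(1/\d^{3/2})$, so the argument of the outer logarithm turns into $\frac{|\partial A|}{\g\,\d^{3/2}}\log\frac{1}{\d^{3/2}}$ while the trailing factor $\log(1/\D)$ becomes $\log(1/\d^{3/2})$. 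The additive $+1$ and all constant prefactors (including the $2$'s hidden in the definition of $\d$) are absorbed into the $O(\cdot)$, reproducing exactly the claimed bound \eqref{eq:e1}.

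Because all of the substance lives in the two theorems being composed, there is no genuine technical obstacle in this corollary; the only thing to watch is the bookkeeping of the smoothing parameter under the nonlinear change of variables. The one point deserving care is confirming that the reparametrization $\d\leftrightarrow\D$ does not distort the asymptotic form — that replacing $\D$ by $\Theta(\d^{3/2})$ inside nested logarithms is harmless — which follows since $\log(c\,x)=\log x+O(1)$ for any constant $c$, so a constant-power rescaling of a logarithm's argument changes it only by a constant. I would also check that $\d=2\cdot(2\D)^{2/3}$ remains in the admissible range $[0,1)$ for the relevant small $\D$, ensuring that both \thmref{upperbound on spread infromal} and \thmref{reflectaroundgs_informal} indeed apply at the chosen parameters.
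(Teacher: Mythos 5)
Your proposal is correct and is exactly the argument the paper intends: the corollary is obtained by chaining the lower bound of \thmref{upperbound on spread infromal} with the protocol cost \eqref{eq:t110} of \thmref{reflectaroundgs_informal}, then setting $\d = 2\cdot(2\D)^{2/3}$ so that $\D = \Theta(\d^{3/2})$ and substituting into the cost expression. The paper gives no separate proof beyond this composition, and your attention to the admissible range of $\D$ (needed for the smoothing bound to apply) and to the harmlessness of constant rescalings inside the logarithms covers the only bookkeeping points.
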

Note that the range of applicability of   \corref{Area bound on entanglement spread informal} is quite general. The bound  \eqref{eq:e1} holds for any local Hamiltonian over an arbitrary interaction (hyper)graph. In particular, we do not assume the Hamiltonian is also \emph{geometrically local} or the qudits are arranged on a \emph{lattice}. In fact, when the Hamiltonian is restricted to any finite dimensional lattice, we can obtain tighter bounds on the entanglement spread by lifting the powerful machinery of AGSPs based on the Chebyshev polynomials \cite{area_law_subexponential_algo,one-dimensional_area_law} from 1D geometries to higher dimensions. This quadratically improves the bound \eqref{eq:t110} to $c=\tilde{O}(\sqrt{|\partial_w A|/\g})$ at the cost of including an extended boundary $\partial_w A$ of constant width $w$ instead of the original boundary $\partial A$ (see \secref{connection to the counter example} for a related discussion). Note that $|\partial_w A|=O(|\partial A|)$ on lattices when $w=O(1)$. In this setting, it is more natural to view $2^c$ as the Schmidt rank of the AGSP operator across the cut. We also do not rely on shared EPR pairs in this setup. To distinguish things from our previous construction, we refer to this operator as the Chebyshev-AGSP.  More precisely, we have:

\begin{thm}[Chebyshev-AGSP for lattices]\label{thm:Chebyshev-AGSP_informal}
Suppose, $H$ is a geometrically-local Hamiltonian with gap $\g$ over a finite-dimensional lattice. Let $(A:B)$ be a bipartition of the lattice. There is an operator $K$ with the Schmidt rank $2^c$ across the partition $(A:B)$ such that $\norm{K-\ketbra{\Omega}{\Omega}} \leq \Delta$ and
\begin{align}
% \label{eq:schmidtrank}
c= \tilde{O}\left(\sqrt{\frac{|\partial_{w} A|}{\gamma}}\cdot \log\left(\frac{ |\partial_w A|^2}{\g}\log^2(\frac{1}{\Delta})\right)\cdot \log\frac{1}{\Delta}\right),
\end{align}
where $\tilde{O}$ hides constant factors related to the geometry of the Hamiltonian and $w=O(1)$. Here, $|\partial_w A|$ is the number of terms in the Hamiltonian that act on the qudits in some extended boundary of constant width around $\partial A$.
\end{thm}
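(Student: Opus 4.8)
The plan is to transplant the Chebyshev-polynomial AGSP machinery behind the one-dimensional area laws \cite{area_law_subexponential_algo,one-dimensional_area_law} to the lattice, measuring complexity directly by the operator Schmidt rank $2^c$ of $K$ across the cut, so that no shared EPR pairs are required. First I would shift and rescale $H$ so that the ground energy is $0$, the gap is $\g$, and the spectrum lies in $\{0\}\cup[\g,\Lambda]$, and set $K=T_\ell\!\left(\frac{(\Lambda+\g)-2H}{\Lambda-\g}\right)/T_\ell\!\left(\frac{\Lambda+\g}{\Lambda-\g}\right)$, the degree-$\ell$ Chebyshev polynomial that carries the excited band $[\g,\Lambda]$ into $[-1,1]$ while sending the ground energy to the point $\frac{\Lambda+\g}{\Lambda-\g}>1$. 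Since $T_\ell$ is bounded by $1$ on $[-1,1]$ and grows like $e^{\Omega(\ell\sqrt{\g/\Lambda})}$ just outside it, this $K$ fixes the ground state, $K\ket{\Omega}=\ket{\Omega}$, and suppresses its orthogonal complement by a factor $\D=e^{-\Omega(\ell\sqrt{\g/\Lambda})}$; inverting, degree $\ell=O(\sqrt{\Lambda/\g}\,\log(1/\D))$ already gives $\norm{K-\ketbra{\Omega}{\Omega}}\le\D$, and this square root is the source of the quadratic improvement over a naive power method.

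Next I would bound the Schmidt rank of the degree-$\ell$ polynomial $K=p_\ell(H)$ across $(A:B)$. Splitting $H=H_A+H_B+H_{\partial}$, where $H_A,H_B$ act on a single side and $H_{\partial}=\sum_b h_b$ collects the interaction terms straddling the cut, the bulk piece $H_A+H_B$ has operator Schmidt rank $2$ while $H_{\partial}$ has Schmidt rank $O(|\partial A|)$. Because the Schmidt rank is submultiplicative under products and subadditive under sums, expanding $H^\ell$ and sorting the $\ell$ factors into ``bulk'' and ``boundary'' gives $\mathrm{SR}(K)\le\bigl(2+O(|\partial A|)\bigr)^{\ell}$, so $c=\log_2\mathrm{SR}(K)=O\!\bigl(\ell\cdot\log|\partial A|\bigr)$. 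Combined with the degree bound this reads $c=\tilde{O}\bigl(\sqrt{\Lambda/\g}\,\log(1/\D)\bigr)$, and the entire theorem therefore hinges on replacing the global norm $\Lambda=\norm{H}=\Theta(n)$ by the boundary quantity $|\partial_w A|$ inside the square root.

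The hard part --- and the only place the lattice geometry and the gap genuinely enter --- is precisely this reduction of the effective energy scale from $\Lambda=\Theta(n)$ to $O(|\partial_w A|)$, since both the degree and, through it, the Schmidt rank inherit its square root. I would do this by trimming the Hamiltonian away from the cut: replace the contribution of the terms outside a slab $\partial_w A$ of constant width $w$ by their near-ground action, obtaining an effective, boundary-localized Hamiltonian $\bar H$ whose relevant energy scale is $O(|\partial_w A|)$ while retaining the ground state $\ket{\Omega}$ and a gap $\Omega(\g)$. The operator built from $\bar H$ must still be close to the global rank-one projector in operator norm and keep small Schmidt rank; here the gap $\g$ forces exponential clustering of correlations in $\ket{\Omega}$, so that energies above $O(|\partial_w A|)$ can only be reached by disturbing many terms far from the cut and contribute negligibly to the action across it. Making this quantitative is where I expect almost all of the work to live: one must (i) bound the trimming error so that it does not accumulate over the degree $\ell=O(\sqrt{|\partial_w A|/\g}\,\log(1/\D))$, and (ii) verify that a constant slab width $w$ suffices, both of which should follow from the exponential decay of correlations in gapped ground states \cite{hastings2007area_law,anshu2019AreaLaw2D}. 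Substituting this degree into the Schmidt-rank bound $\bigl(2+O(|\partial_w A|)\bigr)^{\ell}$ then yields the stated value of $c$.
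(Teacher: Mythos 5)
Your skeleton is the paper's: a degree-$\ell$ Chebyshev polynomial with $\ell = O(\sqrt{\Lambda/\g}\log(1/\D))$, a Schmidt-rank count obtained by expanding powers of $X + Y + H_{\partial_w A}$ with $X,Y$ one-sided (your $(2+O(|\partial A|))^{\ell}$ bound is a touch cruder than the paper's $(s^b|\partial_w A|)^{q}(q+1)^{q+2}$ but gives the same $c = O(\ell\log|\partial_w A|)$), and a truncation step that replaces $\Lambda=\norm{H}=\Theta(n)$ by $O(|\partial_w A|)$. But the step you defer is the step the theorem actually lives in, and the tool you point to there --- exponential decay of correlations --- is not the one that does the job. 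What is needed is a statement about the \emph{energy distribution} of $\ket{\Omega}$ with respect to the subsystem Hamiltonians: the ground state of the full gapped $H$ has exponentially small weight on eigenstates of $H_A$ with energy more than $O(|\partial_w A| + \log\frac{1}{\varepsilon\g})$ above the bottom of $H_A$'s spectrum, so that the truncation $\tilde H_A = H_A\Pi_A^{<\xi}+\xi\Pi_A^{\geq\xi}$ moves the ground state by at most $\varepsilon$ \emph{and keeps the spectral gap at least $\g/2$}. The paper imports this wholesale as Theorem 5 of \cite{Kuwahara_area_law19}, building on \cite{Arad_connecting_global_local_dist}; the proof there is a moment-generating-function / polynomial-approximation argument relating local and global energy distributions. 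Decay of correlations is a statement about pairs of distant observables and gives you neither a bound on $\norm{\Pi_A^{\geq\xi}\ket{\Omega}}$ nor, more seriously, the gap of the truncated operator --- and without gap preservation the Chebyshev degree bound collapses.

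Two further points your sketch does not address. First, in the frustration-free case the paper uses an entirely different truncation, $\tilde H = \sum_{k\in T_\alpha\cap C}h_k + (\iden-\Pi_A)+(\iden-\Pi_B)$ with $\Pi_A,\Pi_B$ the local ground-space projectors, and the gap lower bound for this $\tilde H$ comes from the detectability lemma \cite{detectability_Aharonov} together with its converse via the quantum union bound \cite{Quantum_union_bounds_Gao} --- none of which follows from "near-ground action" heuristics. Second, the truncated bulk operators are no longer local Hamiltonians, so your Schmidt-rank argument must be phrased (as the paper's lemma is) purely in terms of support on one side of the cut rather than locality; this is harmless but needs saying. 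As written, the proposal correctly identifies where the difficulty sits but supplies neither the truncation lemma nor a workable route to it, so the central claim --- $\Lambda \rightsquigarrow O(|\partial_w A|)$ with the gap intact --- remains unproven.
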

\begin{cor}[Tighter bounds on entanglement spread on lattices]\label{cor:Tighter bounds on entanglement spread on lattices}
Under the same conditions in \thmref{upperbound on spread infromal} and \thmref{Chebyshev-AGSP_informal}, the entanglement spread of the ground state of geometrically local Hamiltonians is bounded by 
\begin{align}
          \ES_{\d}(\Omega)\leq\tilde{O}\left(\sqrt{\frac{|\partial_w A|}{\gamma}}\cdot  \log\left(\frac{ |\partial_w A|^2}{\g}\log^2(\frac{1}{\d^{3/2}})\right)\cdot \log\frac{1}{\d^{3/2}}\right)\label{eq:e2}
\end{align}
where $w=O(1)$.
\end{cor}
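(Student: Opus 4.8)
The plan is to read off the bound \eqref{eq:e2} by chaining two ingredients already in hand: the Chebyshev-AGSP construction of \thmref{Chebyshev-AGSP_informal}, which manufactures a low-Schmidt-rank operator approximating the ground-state projector, and the entanglement-spread lower bound of \thmref{upperbound on spread infromal}, which converts communication cost into a bound on $\ES$. The bridge between the two is the standard fact that an operator of small operator Schmidt rank across the cut can be implemented with correspondingly little communication.

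First I would invoke \thmref{Chebyshev-AGSP_informal} on the bipartition $(A:B)$ to produce an operator $K=\sum_{j=1}^{2^c} M_j\ot N_j$ of Schmidt rank $2^c$ with $\norm{K-\ketbra{\Omega}{\Omega}}\le \D$, where $c$ is exactly the quantity displayed in that theorem. Next I would argue that the two-outcome measurement $\{K,\iden-K\}$ is realizable with $O(c)$ qubits of communication: Alice coherently prepares a $c$-qubit index register that selects her factors $M_j$, sends it to Bob who applies the controlled $N_j$ and returns the register, thereby block-encoding $K$ up to the error $\D$ inherited from $\norm{K-\ketbra{\Omega}{\Omega}}\le \D$. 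Since this is a genuine (and, incidentally, EPR-free) $\D$-approximate ground-state measurement, it witnesses the upper bound $\CC_\D(\Omega)=O(c)$.

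I would then feed this upper bound into \thmref{upperbound on spread infromal}. Its lower bound $\CC_\D(\Omega)\ge \ES_{2\cdot(2\D)^{2/3}}(\Omega)-1$ is established against the most powerful (EPR-assisted) protocols and therefore applies a fortiori to the EPR-free protocol just constructed, yielding $\ES_{2\cdot(2\D)^{2/3}}(\Omega)\le O(c)+1=\tilde{O}(c)$. Finally I would reparametrize by setting $\d=2\cdot(2\D)^{2/3}$, equivalently $\D=\tfrac12(\d/2)^{3/2}=\Theta(\d^{3/2})$, so that $\log\tfrac1\D=\Theta\!\big(\log\tfrac{1}{\d^{3/2}}\big)$; substituting this into the expression for $c$ reproduces exactly the claimed bound \eqref{eq:e2}.

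The two theorem invocations and the closing substitution are mechanical, so I expect the only step demanding genuine care to be the middle one: certifying that operator Schmidt rank $2^c$ converts into $O(c)$ communication while faithfully preserving the error budget $\D$, and checking that importing a bound proved for the EPR-assisted model costs nothing here. The latter is immediate, since an EPR-free protocol is a special case of an EPR-assisted one and the lower bound can therefore only become stronger, never weaker, upon specialization.
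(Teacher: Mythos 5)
Your overall architecture (Chebyshev-AGSP $\to$ spread bound, then reparametrize $\d=2(2\D)^{2/3}$ so $\D=\Theta(\d^{3/2})$) matches the paper, and the closing substitution is exactly right. But the middle step --- converting operator Schmidt rank $2^c$ into a genuine $O(c)$-qubit protocol for the two-outcome measurement $\{K,\iden-K\}$ --- is a real gap. Your index-register trick (Alice prepares a $c$-qubit register selecting $M_j$, Bob applies controlled $N_j$) produces at best a \emph{block-encoding} of $K=\sum_j M_j\ot N_j$; to extract the action of $K$ itself you must project the index register back onto its initial superposition, which succeeds only with probability governed by the $\ell_1$-norm of the coefficients in the decomposition. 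For the Chebyshev polynomial $Q_q(\tilde H)$ that $\ell_1$-norm is not $O(1)$, so what you get is a post-selected (non-unitary) protocol, not an implementation of the measurement. The paper flags precisely this: it states that the Schmidt-rank-$2^c$ AGSP corresponds to a \emph{non-unitary} protocol and explicitly says it is not known whether the parameters of \corref{Tighter bounds on entanglement spread on lattices} can be achieved by unitary protocols. Compare also \propref{compressionlemma}, whose compressed Schmidt rank scales as $(\sum_i\alpha_i/\Delta)^{O(1)}$ --- the $\ell_1$-norm dependence you are implicitly assuming away.

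The fix is that no communication protocol is needed at all. The formal version of \thmref{upperbound on spread infromal} is \thmref{bound ES using AGSP}, which is stated and proved purely in terms of the Schmidt-rank multiplier $D$ of the AGSP operator: its proof applies the Young--Eckart theorem to $K\ket{\Phi}\ket{\Omega_{\text{heavy}}}$, whose Schmidt rank is bounded by $p(b-1)D$ regardless of whether $K$ is unitary, implementable, or EPR-assisted (the remark after \defref{AGSP} makes this explicit for the Chebyshev case). So the corollary follows by taking $K=Q_q(\tilde H)$ from \thmref{Chebyshev-AGSP_informal} with $\log D=c$, reading off $\ES_{2(2\D)^{2/3}}(\Omega)\le c+1$ from \eqref{eq:k5}, and then performing your substitution. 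Replace your protocol construction with a direct citation of the Schmidt-rank form of the theorem and the argument goes through.
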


What is the operational difference between these two approaches? If the Chebyshev-AGSP has Schmidt rank $2^c$ then it can be thought of as resulting from a {\em non-unitary} protocol that communicates $O(c)$ qubits.  Equivalently, it could be a unitary protocol that uses post-selection, meaning that it has some probability $<1$ of outputting ``don't know'' and conditioned on not answering ``don't know'' has a good chance of correctly distinguishing the ground state.  We do not know whether the better parameters of \corref{Tighter bounds on entanglement spread on lattices} can be achieved using unitary protocols or on general graphs.

\section{Main ideas}
Here, we describe the main ideas and technical tools used in the proof of our results. 
\subsection{AGSP from quantum phase estimation} 

One ingredient of our proofs is a novel construction of an AGSP for the ground state of gapped Hamiltonians based on the quantum phase estimation (QPE) algorithm. We find a protocol between Alice and Bob that allows them to jointly apply this AGSP with communication complexity $\tilde{O}(|\partial A|/\g)$. As mentioned in \secref{Communication protocol for approximate ground space projection}, one advantage of using QPE compared to the conventional Chebyshev polynomials (reviewed in \secref{high dim AGSp using Chebyshev}) is that it applies not only to geometrically-local Hamiltonians on lattices, but also continues to work for any local Hamiltonian on arbitrary interaction graphs.

One can view QPE as a procedure that given an eigenstate of a Hamiltonian $H$, uses $O(\log(1/\g))$ many ancillary qubits, makes $O(1/\g)$ queries to the Hamiltonian simulation oracle $e^{-iH}$, and determines the energy of the input state with accuracy $\g/2$. By letting $\g$ be the gap of the Hamiltonian, this algorithm basically performs a two outcome measurement $\{\Omega,\iden-\Omega\}$ on any input state, where $\Omega$ is the ground state of $H$.

To implement this algorithm in a distributed fashion involving two parties, Alice and Bob need to prepare and reflect about the state $\frac{1}{\sqrt{T+1}}\sum_{t=0}^{T}\ket{t}\ket{t}$ for $T=O(1/\g)$ and work together to apply the operator $e^{-itH}$ conditioned on the register $\ket{t}$. In the next section, we show how to achieve this.
\subsection{Communication protocol based on interaction picture Hamiltonian simulation} 
For a given partition of the qudits between Alice and Bob, we can write the Hamiltonian as $H=H_A+H_{\partial A}+H_B$ where $[H_A,H_B]=0$. One of our main technical contributions is designing a communication protocol for performing the Hamiltonian simulation operator $e^{-itH}$ with a communication cost that scales as $O(t\norm{H_{\partial A}})$ instead of the conventional $O(t\norm{H})$.

It is not hard to see how one can achieve this if the boundary term $H_{\partial A}$ also commutes with $H_A$ and $H_B$. In that case, we have $e^{-it H}=e^{-itH_A} e^{-itH_{\partial A}}e^{-itH_B}$ and the parties can implement $e^{-it H}$ if one of them sends the boundary qudits that are in the support of $H_{\partial A}$ to the other. This yields a communication cost that scales with $|\supp(H_{\partial A})|=O(|\partial A|)$. In general, however, $H_{\partial A}$ does not commute with $H_A$ and $H_B$ and finding a non-trivial protocol for the Hamiltonian simulation becomes challenging. 

One attempt to remedy this might be to use the Trotterization technique. That is, to divide the simulation into $\eta$ segments and implement $e^{-itH/\eta}$ for $\eta$ consecutive times. If $\eta$ is large enough, $[H_{\partial A}/\eta,H_{A \text{ or } B}/\eta]\approx 0$, and we again recover the commuting case. That is, the parties collaboratively implement $e^{-itH_{\partial A}/\eta}$. Unfortunately, for this to work, we need $\eta$ (and therefore, the communication cost) to be $O(t\norm{H})$, which is far from the bound $O(t \norm{H_{\partial A}})$ we are aiming for. 

We instead use a recent framework for Hamiltonian simulation developed in \cite{low2018Hamiltonian} known as the ``interaction picture'' Hamiltonian simulation. Intuitively, one can view this as a sophisticated change of variables that is widely used in physics and allows us to separate the contribution of the boundary term from $H_A$ and $H_B$. Suppose we want to prepare the state $\ket{\psi(t)}=e^{-itH}\ket{\psi(0)}$. For any $\ket{\psi(t)}$, we define its counterpart in the interaction picture by
\begin{align}
    \ket{\psi_I(t)}=e^{-it(H_A+H_B)}\ket{\psi(t)}.\label{eq;e8}
 \end{align}
Since the operator $e^{-it(H_A+H_B)}$ can be applied locally by the parties, the states $\ket{\psi_I(t)}$ and $\ket{\psi(t)}$ can be switched with each other with no extra communication. The point of this transformation is that the state $\ket{\psi_I(t)}$ can be prepared starting from $\ket{\psi(0)}$ by applying a unitary $U(t)$ which is the Hamiltonian simulation operator associated with a \emph{time dependent} Hamiltonian 
\begin{align}
H_I(t)=e^{it(H_A+H_B)}H_{\partial A}e^{-it(H_A+H_B)}.\label{eq:e12}
\end{align}
Putting the time-dependence of $H_I(t)$ aside (we discuss that in more details in \secref{Interaction picture protocol}), the main gain is that $\norm{H_I(t)} =\norm{H_{\partial A}}$. This solves the issue we mentioned before because here, the length of Trotter step $\eta$ in implementing $U(t)$ can be taken as small as $O(t\norm{H_{\partial A}})$ instead of the original $O(t\norm{H})$. The remaining task is to find a communication protocol for performing $U(t/\eta)$, which now, is not simply the operator $e^{-itH_{\partial A}/\eta}$ that we had before. This is done in \cite{berry2015simulating,low2018Hamiltonian} using the Linear Combination of Unitaries (LCU) method. Our next idea is a modification of this algorithm that suits our framework better. 

\subsection{EPR-assisted communication and the LCU method}
Our results regarding the ground state entanglement and the communication complexity are information theoretic in nature. In particular, the running time or other algorithmic aspects of the tools we use, such as the Hamiltonian simulation, do not affect our conclusions. Here, we explain how we can use this observation to simplify the analysis of a part of our protocol.

In the LCU method, one Taylor expands the Hamiltonian simulation operator $U(t)$ to get $U(t)\approx\sum_k \a_k u_k^{(A)} \ot u_k^{(B)}$ for some choice of coefficients $\a_k\in \bbR$ and unitaries $u_k^{(A)}$ and $u_k^{(B)}$ that act on Alice and Bob's qudits respectively. To keep the running time efficient such Taylor expansions are truncated at low orders. 

When Alice and Bob jointly implement the LCU algorithm, they need to prepare and share the ancillary state $\ket{\a}=(\sum_k \a_k)^{-1/2} \sum_k \sqrt{\a_k} \ket{k}_A\ket{k}_B$. Then, they proceed by applying the unitaries $u_k^{(A)} \ot u_k^{(B)}$ conditioned on their register $\ket{k}$. Now suppose instead of truncating the expansions, we continue adding higher terms. Of course, the issue is that the number of coefficients $\a_k$ and thus, the communication cost of sharing $\ket{\a}$ and reflecting about $\ket{\a}$ also  increases. On the other hand, we know that if instead of $\ket{\a}$, the parties share a maximally entangled state, the bound \eqref{eq:e1} on the entanglement spread remains intact. In other words, it is not the number of exchanged ancillary qudits in the protocol, but their entanglement spread that affects our final bound \eqref{eq:e1}. 

We fix this problem by modifying the LCU algorithm such that instead of the state $\ket{\a}$, Alice and Bob only share the maximally entangled state (or equivalently some number of EPR pairs).   This state only needs to be shared once, which can be done outside the protocol, and then many reflections about it can be done with a cost independent of the size of the state.  Now we can keep an unbounded number of terms in the expansions and avoid similar approximations in our protocol. This blows up the running time of these procedures, but maintains the communication complexity.

\subsection{AGSP for lattices}\label{sec:high dim AGSp using Chebyshev}
Our improved bound for the lattice Hamiltonians in  \thmref{Chebyshev-AGSP_informal} are obtained using the AGSPs based on the Chebyshev polynomials. These were first developed in the context of the area law for entanglement entropy in 1D systems \cite{one-dimensional_area_law,area_law_subexponential_algo,arad_1d}. The AGSP framework \cite{detectability_Aharonov, one-dimensional_area_law} in itself provides a framework to connect the min-entropy and entanglement entropy (see \cite[Lemma 5.3]{detectability_Aharonov} or \cite[Lemma III.3]{one-dimensional_area_law}). But this connection does not give us the desired bound on entanglement spread, as it relates entanglement entropy and min-entropy by a certain \emph{multiplicative factor}, that may be large. For instance, \cite[Lemma III.3]{one-dimensional_area_law} implies that by choosing the Chebyshev-based AGSP which has a shrinking of $O(1)$ and the Schmidt rank of $2^{O(\sqrt{|\partial A|})}$, we get 
\begin{equation}
\label{eq:minentropytoentropy}
S(\Omega_A)= O(\sqrt{|\partial A|}S_{\min}(\Omega_A)),
\end{equation}
where $S(\Omega_A)$ is the von-Neumann entropy of $(\Omega_A)$.     

Here, we show that a simple adaptation of the Chebyshev-based AGSP, along with appropriate smoothing, leads to a stronger theorem for lattices, which shows that entanglement spread scales as $ O(\sqrt{|\partial A|})$ (see discussion section for the interpretation). We utilize the ``truncation step'' \cite{area_law_subexponential_algo} which is used to lower the norm of the Hamiltonian away from a cut while maintaining its gap and ground state. We apply the truncation to both the frustration-free and frustrated cases. In the former, we use the Detectability Lemma operator \cite{detectability_Aharonov}, while in the latter, we rely on the recent techniques of \cite{Kuwahara_area_law19} to perform the truncation. 
\section{Discussion and connection to previous work}\label{discussion}
\paragraph{Quadratically better scaling on lattices:} In \corref{Tighter bounds on entanglement spread on lattices}, we have shown that the entanglement spread on lattices scales as $\sqrt{|\partial A|}$ The intuition behind this comes from the exponential decay of correlations which is shown to hold for gapped Hamiltonians on any finite dimensional lattice \cite{Hastings04,HastingsK06, NachtergaeleS06}. The decay of correlations implies that the distant qudits along the boundary $\partial A$ are almost uncorrelated. This suggests that the ground state across the boundary is roughly in a product form $\ket{\phi}^{\ot |\partial A|}_{AB}$ composed of $O(|\partial A|)$ partially entangled states $\ket{\phi}_{AB}$ . By using conventional concentration bounds  \cite{PopescuConcentration}, it can be shown that the \emph{smooth} entanglement spread obeys $\ES_{\d}(\phi^{\ot k})= O(\sqrt{k})$, which is quadratically smaller than the $\d=0$ case where $\ES(\phi^{\ot k})=O(k)$. Thus, an entanglement spread of $O(\sqrt{|\partial A|})$ that we prove for gapped lattice Hamiltonians matches our intuitive expectation. 

One might wonder if our quadratic bound in \thmref{Chebyshev-AGSP_informal} for lattices can be improved. Here, we show that this is not possible in general. Consider a $D$-dimensional cubic lattice $[n]^{D}$ for an even $n$ such that the qubits are located on the vertices of the lattice. Let $A=[\frac{n}{2}-1]\times [n]\times \ldots \times [n]$ define a bipartition of this lattice. Suppose, we have a Hamiltonian $H$ on the lattice given by 
$$H=\sum_{i_1, \ldots, i_D: i_1=\text{even}} (\iden - \Psi_{i_1, \ldots i_D}),$$ 
where the entangled states 
$$\ket{\Psi}_{i_1, \ldots i_D}=\left(\sqrt{\frac{2}{3}}\ket{0}_{i_1, \ldots i_D}\ket{0}_{i_1+1, \ldots i_D}+\sqrt{\frac{1}{3}}\ket{1}_{i_1, \ldots i_D}\ket{1}_{i_1+1, \ldots i_D}\right)$$
is defined between qudits $(i_1,i_2,\ldots i_D)$ and $(i_1+1,i_2,\ldots i_D)$. Then, the ground state is the simple two-qudit product state $\bigotimes_{i_1, \ldots i_D: i_1=\text{even}}\ket{\Psi}_{i_1, \ldots i_D}$. It is easily seen that the entanglement spread across this partition is at least $\Omega(\sqrt{n^{D-1}})=\Omega(\sqrt{|\partial A|})$ achieving the bound in \thmref{Chebyshev-AGSP_informal}.

Is it possible to prove an analog of Theorem \ref{thm:Chebyshev-AGSP_informal} with $S_{\min}^{2(2\D)^{2/3}}(\Omega_A)$ replaced by $S_{\min}(\Omega_A)$? This cannot be done without changing the upper bound from $O(|\partial A|^{1/2})$ to $O(|\partial A|)$, since the two-qubit product ground state constructed above has the property that $S_{\min}(\Omega_A) \leq |\partial A|/4$ and $S_{\max}^{2(2\D)^{2/3}}(\Omega_A)\geq 0.92|\partial A|$. Improving $S^{2(2\D)^{2/3}}_{\max}(\Omega_A)$ to $S_{\max}(\Omega_A)$ is also not possible since there are Hamiltonians such as the transverse field Ising model that are gapped but have $S_{\max}(\Omega_A)$ scaling with the system size \cite{calabrese2004entanglement}.

\paragraph{Li-Haldane conjecture on the entanglement spectra:} An application of our result in \thmref{Chebyshev-AGSP_informal} and \corref{Tighter bounds on entanglement spread on lattices} is to give formal evidence for the Li-Haldane conjecture. According to this conjecture, the entanglement spectrum of a gapped ground state over a cut is similar to the spectrum of a Hamiltonian, known as the modular Hamiltonian, that acts only on the boundary \cite{LiHaldane:2008} \footnote{Note that assuming this conjecture, we see that the entanglement entropy of the reduced ground state is close to that of the boundary Gibbs state and therefore, obeys an area law.}. This a surprising fact given that in general, the spectrum of a mixed state over a region $A$ has a support of size $e^{O(|A|)}$. Whereas according to this conjecture for gapped Hamiltonians, the spectrum of the reduced ground state is similar to that of a very specific state -- i.e. the Gibbs state -- acting on a much smaller space $\partial A$. To see this connection, we should look more closely at the distribution of the eigenvalues of the Gibbs state. To this end, we use recent results that establish concentration bounds on the energy distribution of this state. More formally, \cite[Corollary 1]{Kuwahara_Saito_gibbs_concentration_2019} shows that for a Gibbs state $\rho_{\beta}=\frac{e^{-\beta \modH}}{\Tr[e^{-\beta \modH}]}$ of a local Hamiltonian $\modH$ with $\beta=O(1)$, it holds that $$\Tr[\Pi_{[\xi_-,\xi_+]}\rho_{\beta}]\geq 1-\d,$$ where $\Pi_{[\xi_-,\xi_+]}$ is the projector onto the states with energy in the range $[\xi_-,\xi_+]$, where
\begin{align}
\xi_{\pm}=\Tr[\modH\rho_{\beta}]\pm  O(1)\cdot \sqrt{\|\modH\|\log\frac{1}{\d}}.\nn
\end{align}
In other words, the spectrum of the Gibbs state $\r_{\b}$ is concentrated in the energy range $\pm O\left(\norm{\modH}^{1/2}\right)$ around the average energy $\Tr[\modH \r_{\b}]$. This result implies that the entanglement spread of the spectrum of the Gibbs state satisfies
\begin{align}
\ES_{\d}(\rho_{\beta}) \leq \log \left( \frac{e^{-\b \xi_{-}}}{e^{-\b \xi_{+}}}\right)=O(1)\cdot \sqrt{\|\modH\|\log\frac{1}{\d}}.\nn
\end{align}
Since based on our assumption, $\modH$ is a modular Hamiltonian with support only on the boundary of a region $A$, it has the norm $\|\modH\|=O(|\partial A|)$. Thus, the entanglement spread of its Gibbs state scales as $\ES_{\d}(\rho_{\beta})=O(1)\sqrt{|\partial A|\log\frac{1}{\d}}$. This is noticeably similar to the scaling predicted by our result for the gapped ground states in \corref{Tighter bounds on entanglement spread on lattices}. This shows that at least from the perspective of the entanglement spread, the reduced ground state behaves similar to a Gibbs state supported on the boundary.
\paragraph{Implications for proving area law for the entanglement entropy:} Theorem \ref{thm:Chebyshev-AGSP_informal} shows that if one can prove an area law for $S_{\min}^{2(2\D)^{2/3}}(\Omega_A)$, then this implies an area law for $S_{\max}^{2(2\D)^{2/3}}(\Omega_A)$ and hence, for the entanglement entropy. In contrast, prior work (see Equation \ref{eq:minentropytoentropy}) would show that an area law for the min-entropy $S_{\min}(\Omega_A)=O(|\partial A|)$ leads to a sub-volume law $O(|\partial A|^{3/2})$ on the entanglement entropy. We cannot directly compare our result with this, as $S_{\min}(\Omega_A)$ is smaller than $S_{\min}^{2(2\D)^{2/3}}(\Omega_A)$. 

As mentioned earlier, we cannot replace $S^{2(2\D)^{2/3}}_{\min}(\Omega_A)$ with $S_{\min}(\Omega_A)$ without changing our upper bound on the entanglement spread to $O(|\partial A|)$. Achieving such a bound is an interesting open problem since it would rigorously prove that min-entropy area law implies entanglement entropy area law. The utility of this is that min-entropy area law may be easier to prove in comparison to the entanglement entropy area law. For instance, for specific models such as stoquastic local Hamiltonians, proving min-entropy area law can be reduced to a classical problem \cite{Bravyi_stoquastic}

\paragraph{Connection to the counter example to the area law in \cite{Aharonov2014CounterExampleToAreaLaw}}\label{sec:connection to the counter example}
Our setup is closest to  \cite{Aharonov2014CounterExampleToAreaLaw}, where the authors construct a family of gapped Hamiltonians whose ground states violate the entropy area law. This is done by connecting a protocol for testing maximally entangled states to the ground state of a local Hamiltonian using Kitaev's circuit-to-Hamiltonian construction. The obtained ground state admits a bipartition into parts $A$ and $B$ such that a single Hamiltonian term crosses the cut, but it enforces a maximally entangled state $\Phi_{|A|}$ between $A$ and $B$. This causes the entropy of part $A$ to be $O(|A|)$ violating the entropy area law. Nevertheless, we see that this construction still satisfies our entanglement spread area law simply because the maximally entangled state has zero entanglement spread, and the ground state has entanglement spread at most $O(1)$. When combined with our previous discussion on area law, this loosely suggests the following: The ground state of a gapped Hamiltonian always exhibits a small entanglement spread. But it either has a large min-entropy (such as maximally entangled states in the counter-example Hamiltonian) hence not obeying an entropy area law, or it possesses small min-entropy (such as the 1D ground states) thus obeying an entropy area law.

\paragraph{Hamiltonian simulation by Trotterization:} While we use the interaction picture Hamiltonian simulation algorithm, it would be interesting to achieve the same result by directly using the Trotterization method along the lines of \cite{berry2007nonlocalSimulation}. That is, we want to simulate the Hamiltonian $e^{-i\tau H_{\partial A}}$ for some small $\tau\approx 1/\norm{H}$ with the communication cost $O(\tau\norm{H_{\partial A}})$. By repeating this step for $1/\tau= O(\norm{H})$ times, we obtain the desired overall scaling of $O(\norm{H_{\partial A}})$. The issue with naively using this approach is that each simulation step requires exchanging one qubit of communication resulting in a large communication complexity. Note, however, that the \emph{entropy} of this exchanged qubit is $O(\tau)$. Hence, we expect the quantum information cost of this step \cite{Touchette:2015} to also be $O(\tau)$. We anticipate that performing quantum information theoretic \emph{compression} on such a protocol would lead to a new protocol achieving the desired bound. Finally, note that recent results in \cite{childs2019theoryoftrotter} achieve a similar bound for the clustered Hamiltonians by a tighter analysis of the Trotter error.

\paragraph{Compression of Schmidt rank:} A by-product of our techniques is a compression tool for the Schmidt rank of any AGSP using EPR assistance. Since this might find other applications beyond our work, we formally state it in the following proposition.

\begin{prop}
\label{prop:compressionlemma}
Fix an AGSP
$$K= \sum_i \alpha_i U_i\otimes V_i,$$
with $\alpha_i>0$, $\|U_i\|,\|V_i\|\leq 1$, and $U_i$, $V_i$ acting on subsystem $A$, $B$ respectively. Suppose 
$$\norm{K-\ketbra{\Omega}{\Omega}}\leq \Delta.$$
Then there exists an EPR-assisted AGSP $K'$ (as in \defref{ERP AGSP informal}) with Schmidt rank $\left(\frac{\sum_i\alpha_i}{\Delta}\right)^{\mathcal{O}(1)}$ such that $$\norm{K'(\ket{\Phi}\otimes \iden)-\ket{\Phi}\otimes\ketbra{\Omega}{\Omega}}\leq 2\Delta.$$
\end{prop}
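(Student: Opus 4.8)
The plan is to view $K$ as a linear-combination-of-unitaries (LCU) block encoding whose ``select'' operation is a \emph{product} across the cut, pushing all of the bipartite correlation into a shared maximally entangled state about which we can reflect cheaply using quantum expanders. Since the resulting $K'$ is allowed to be non-unitary, the weak block encoding can be rescaled into a genuine projector without any amplitude amplification.

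Write $s=\sum_i \alpha_i$, let $L$ be the number of terms, and set $p_i=\alpha_i/s$. First I would flatten the weights: pick an integer $N$ and nonnegative integers $m_i$ with $\sum_i m_i=N$ and $|m_i/N-p_i|\le 1/N$, partition $[N]$ into groups $G_i$ of size $m_i$, and let $f(k)=i$ for $k\in G_i$. Taking the shared EPR state $\ket{\Phi}$ to be the maximally entangled state $\ket{\Phi_N}=\frac{1}{\sqrt N}\sum_{k=1}^N \ket{k}_{A'}\ket{k}_{B'}$ on ancilla registers $A'$ (Alice) and $B'$ (Bob), define the local group-controlled contractions $\tilde S_A=\sum_k \ketbra{k}{k}_{A'}\otimes U_{f(k)}$ and $\tilde S_B=\sum_k \ketbra{k}{k}_{B'}\otimes V_{f(k)}$, and set $\tilde S=\tilde S_A\otimes \tilde S_B$. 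The compression insight is that $\tilde S$ is a product operator across $(A'A:B'B)$ and hence has Schmidt rank $1$ no matter how large $L$ is, while a direct calculation gives the block encoding
$$ (\bra{\Phi_N}\otimes\iden)\,\tilde S\,(\ket{\Phi_N}\otimes\iden)=\tfrac{1}{N}\sum_k U_{f(k)}\otimes V_{f(k)}=\tfrac1s \tilde K, \qquad \tilde K:=\tfrac{s}{N}\sum_i m_i\,U_i\otimes V_i, $$
with $\norm{\tilde K-K}\le sL/N$.

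Next I would replace the exact ancilla projector $\ketbra{\Phi_N}{\Phi_N}$, whose operator Schmidt rank is $N^2$, by a cheap approximation. Using a constant-degree quantum expander $\{W_j\}_{j=1}^D$ on the $N$-dimensional space together with the invariance $(W\otimes \bar W)\ket{\Phi_N}=\ket{\Phi_N}$ (with $\bar W$ the entrywise conjugate), the operator $\Pi=\big(\tfrac1D\sum_j W_j\otimes \bar W_j\big)^{t}$ obeys $\norm{\Pi-\ketbra{\Phi_N}{\Phi_N}}\le \lambda^t$, where $\lambda<1$ is the expander gap. It is a sum of $D^t$ product terms, so its Schmidt rank is at most $D^t=\mathrm{poly}(1/\epsilon)$ once $t=O(\log(1/\epsilon))$ makes $\lambda^t\le \epsilon$, and crucially this is independent of $N$. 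I then define the non-unitary operator
$$ K'=s\,(\Pi\otimes\iden)\,\tilde S\,(\Pi\otimes\iden), $$
whose Schmidt rank is at most $(\text{rank }\Pi)^2\cdot(\text{rank }\tilde S)=\mathrm{poly}(1/\epsilon)$. A short triangle-inequality estimate, using $\ketbra{\Phi_N}{\Phi_N}\ket{\Phi_N}=\ket{\Phi_N}$ and $\norm{\tilde S}\le 1$, shows that $K'(\ket{\Phi_N}\otimes\iden)$ agrees with $\ket{\Phi_N}\otimes \tilde K$ up to error $O(s\epsilon)$, and therefore with $\ket{\Phi_N}\otimes\ketbra{\Omega}{\Omega}$ up to $O(s\epsilon)+sL/N+\Delta$.

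The step I expect to be the main obstacle is controlling this final error, because the block encoding only yields the $\sim 1/s$-strength operator $\tilde K/s$, so rescaling by $s$ to recover the full projector multiplies both the expander error $\epsilon$ and the rounding error $L/N$ by $s$. The point that rescues the bound is that the Schmidt rank of $\Pi$ grows only like $\mathrm{poly}(1/\epsilon)$ (equivalently $2^{O(\log(1/\epsilon))}$ exchanged qubits) and never sees $N$. Choosing $\epsilon=\Theta(\Delta/s)$ and $N=\Theta(sL/\Delta)$ then pushes the total error below $2\Delta$ while keeping the Schmidt rank at $\mathrm{poly}(s/\Delta)=\big(\sum_i\alpha_i/\Delta\big)^{O(1)}$, as claimed. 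I would close by verifying that all inequalities survive when the $U_i,V_i$ are merely norm-bounded rather than unitary, since then $\tilde S$ is a contraction (which is all the argument uses), and by noting that if the sum defining $K$ has infinitely many terms one first truncates it at negligible cost, the final rank being independent of $L$.
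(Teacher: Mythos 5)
Your proposal is correct and follows essentially the same route as the paper's proof: flatten the coefficients $\alpha_i$ into a uniform distribution over an enlarged index set, control the product select operators $\tilde S_A\otimes\tilde S_B$ on a shared maximally entangled ancilla, and replace the exact projector $\ketbra{\Phi}{\Phi}$ by the quantum-expander approximation of Schmidt rank $\mathrm{poly}(1/\epsilon)$ with $\epsilon=\Theta(\Delta/\sum_i\alpha_i)$, exactly as in Appendix A (which invokes \thmref{protocol for reflection around the MES} for the operator $L$ playing the role of your $\Pi$). Your treatment of the rounding error and of the non-unitary/infinite-sum cases is slightly more explicit than the paper's, but the argument is the same.
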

Hence, given an AGSP $K$, we can use \propref{compressionlemma} to construct an EPR-assisted AGSP with similar shrinking $\D$ but a Schmidt rank only polynomial in $\ell_1$-norm of the coefficients $\a_i$ in the Schmidt decomposition of $K$.

\section{Preliminaries}\label{sec:prelim}

\paragraph{Local Hamiltonians:} Let $\cS$ be a collection of $n$ spins, each with dimension $s$. The interactions between these spins are described by a local Hamiltonian $H=\sum_{k=1}^N h_k$
where the operators $0 \preceq h_k \preceq \iden$ act nontrivially only on at most $\k$ spins. Let $H_X$ denote the Hamiltonian restricted to region $X\subseteq \cS$. For a bipartition $(A : B)$ of the set $\cS$, we write $H=H_A+H_{B}+H_{\partial A}$, where $H_{\partial A}$ is the collection of interaction terms acting on both $A$ and $B$. We denote the Hilbert space of these partitions and the whole system by $\cH_A$, $\cH_B$ and $\cH_{AB}$ respectively.

We denote the spectrum of $H$ by $E_0,E_1,\dots, E_{\max}$. Let $\ket{\Omega}$ be the \emph{unique} ground state of $H$ and $\ket{E_1},\ket{E_2},\dots,\ket{E_{\max}}$ the other eigenstates. The spectral gap of the Hamiltonian $H$ is a constant $\g$ such that $E_1=E_0+\g$.

\paragraph{Communication protocols:} In what follows, we consider quantum communication protocols between Alice and Bob. We assume, a bipartition $(A : B)$ of the set $\cS$ is shared between the parties such that Alice has access to spins in  region $A$ while Bob has access to those in region $B$. Both parties also have their own additional registers.

The parties communicate by sending qubits, and can cooperate to implement an operator supported on $A \cup B$. The communication complexity of implementing such an operator is defined as the total number of exchanged qubits. 

\paragraph{Two-party entanglement}
Given a state $\ket{\psi}_{AB}$ shared between Alice and Bob with the Schmidt coefficients $\l_1\geq \l_2\geq \dots\geq \l_d$, the R\'enyi entropy of order $\a$ of the reduced state $\psi_A=\Tr_{B}\ketbra{\psi}{\psi}$ is defined as 
\begin{align}
S_{\a}(\psi_A)=\frac{1}{1-\a}\log \left(\Tr \psi_A^{\a}\right)=\frac{1}{1-\a}\log \left(\sum_{i=1}^d \l_i^{\a}\right), \quad 0<\a<\infty.
\end{align}
Specifically for $\a=0,1,\infty$, we define the $\max$- and $\min$- entropies by $S_{\max}(\psi_A)=\log\left(\rank\left(\psi_A\right)\right)$ and $S_{\min}(\psi_A)=-\log \l_1$. The von Neumann entropy $S(\psi_A)$ is the limiting case of $\lim_{\a\rightarrow 1}S_{\a}(\psi_A)=-\Tr[\psi_A \log \psi_A]$. 
In this paper, we mostly use a robust version of these entropies defined as follows. 
\begin{definition}[Smooth R\'enyi entropies and entanglement spread]\label{def:smooth ent}
 Consider a state $\r$ with eigenvalues $\l_1\geq \l_2 \geq \dots \geq \l_d$. For $\d \in (0,1)$, let
 \begin{align}
 r_{\d}(\r)=\{L\subseteq [d]:\sum_{i\in L} \l_i\geq 1-\d\}.
 \end{align}
 We define the $\d$-smooth max- and min- entropies of the state $\r$ by 
 \begin{align}
     S_{\max}^{\d}(\r)&=\min_{L\in r_{\d}(\r)} \log |L|,\label{eq:k4}\\ 
     S_{\min}^{\d}(\r)&=-\min_{L\in r_{\d}(\r)} \log \left(\max_{i\in L} \l_i\right).
 \end{align}
The $\d$-smooth entanglement spread of the state $\r$ is defined as
\begin{align}
    \ES_{\d}(\r)=S_{\max}^{\d}(\r)-S_{\min}^{\d}(\r)
\end{align}
\end{definition}
\begin{lem}[Young-Eckart theorem]\label{lem:Young-Eckart}
Consider a bipartite state $\ket{\psi}\in \cH_{AB}$ with the Schmidt coefficients $\l_1\geq \l_2\geq \dots\geq \l_d$. Let $\ket{\phi}\in \cH_{AB}$ be the state with the Schmidt rank $\leq r$ which has the largest overlap with $\ket{\psi}$. It holds that $|\braket{\phi}{\psi}|^2\leq \sum_{i=1}^r \l_i$.
\end{lem}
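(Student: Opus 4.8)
The plan is to recast the statement as a low-rank matrix approximation problem, i.e.\ as the Eckart--Young theorem, using the standard isomorphism between bipartite vectors and matrices. Write the Schmidt decomposition $\ket{\psi}=\sum_{i=1}^d \sqrt{\l_i}\ket{i}_A\ket{i}_B$ and identify $\ket{\psi}$ with the matrix $\Psi=\sum_i \sqrt{\l_i}\ketbra{i}{i}$ acting between the two local spaces, so that $\norm{\Psi}_F^2=\sum_i\l_i=1$. Under this correspondence a vector $\ket{\phi}$ of Schmidt rank at most $r$ is sent to a matrix $\Phi$ of rank at most $r$ with $\norm{\Phi}_F=1$, and the overlap becomes the Hilbert--Schmidt inner product $\braket{\phi}{\psi}=\Tr(\Phi^\dagger\Psi)=\langle \Phi,\Psi\rangle$. (The maximizing $\ket{\phi}$ exists by compactness of the set of normalized rank-$\le r$ matrices, though this is not needed for the bound.)

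Next I would exploit the rank constraint through a single projector. Let $P$ be the orthogonal projector onto the range of $\Phi$, so that $\rank P\le r$ and $P\Phi=\Phi$, hence $\Phi^\dagger=\Phi^\dagger P$. Then $\braket{\phi}{\psi}=\Tr(\Phi^\dagger P\Psi)=\langle \Phi,P\Psi\rangle$, and Cauchy--Schwarz for the Hilbert--Schmidt inner product gives $|\braket{\phi}{\psi}|\le\norm{\Phi}_F\,\norm{P\Psi}_F=\norm{P\Psi}_F$. It therefore suffices to bound $\norm{P\Psi}_F^2$ over all projectors $P$ of rank at most $r$.

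The final step is the optimization over $P$. Since $\Psi$ is diagonal in the Schmidt basis, $\norm{P\Psi}_F^2=\Tr(P\Psi^2)=\sum_k \l_k\,\bra{k}P\ket{k}$, a weighted sum of the diagonal entries $p_k:=\bra{k}P\ket{k}$. These satisfy $0\le p_k\le 1$ together with $\sum_k p_k=\Tr P=\rank P\le r$. Because the weights $\l_1\ge\l_2\ge\dots$ are sorted in descending order, the linear program of maximizing $\sum_k\l_k p_k$ under these constraints is solved by placing unit weight on the $r$ largest coefficients, giving $\sum_k\l_k p_k\le\sum_{i=1}^r\l_i$. Combining this with the previous display yields $|\braket{\phi}{\psi}|^2\le\sum_{i=1}^r\l_i$. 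The bound is tight: taking $\ket{\phi}\propto\sum_{i=1}^r\sqrt{\l_i}\ket{i}_A\ket{i}_B$ attains it.

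The only real content, and the step I would be most careful about, is the reduction in the second paragraph: recognizing that the range projector $P$ alone already captures the rank constraint (no separate row-space projector is needed), and that Cauchy--Schwarz then collapses the maximization to the single Frobenius norm $\norm{P\Psi}_F$. Everything after that is the elementary sorted-weights linear program, whose optimum is truncation, and the setup in the first paragraph is the routine vector--matrix dictionary. I do not expect any genuine obstacle beyond keeping the projector manipulation clean.
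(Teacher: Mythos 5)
Your proof is correct. Note that the paper itself gives no proof of this lemma: it is stated in the preliminaries as the (standard) Eckart--Young theorem and invoked as a known result in the proof of Theorem \ref{thm:bound ES using AGSP}, so there is no in-paper argument to compare against. Your derivation --- passing to the matrix picture, inserting the range projector $P$ of $\Phi$ so that $\braket{\phi}{\psi}=\langle \Phi, P\Psi\rangle$, applying Cauchy--Schwarz, and then solving the sorted linear program $\max \sum_k \lambda_k p_k$ subject to $0\le p_k\le 1$ and $\sum_k p_k\le r$ --- is a complete and standard proof; the one point worth stating explicitly is that when the Schmidt basis does not span all of $\cH_A$ you only get $\sum_k \bra{k}P\ket{k}\le \Tr P\le r$ rather than equality, but the inequality is all the linear-programming step needs.
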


\section{Approximate ground space projector and entanglement spread}
\begin{definition}[EPR-assisted AGSP, restatement of \defref{ERP AGSP informal}]\label{def:AGSP}
Fix a bipartition $(A:B)$ of the spins, let $\ket{\Omega}\in \cH_{AB}$ be the ground state of a local Hamiltonian $H$ and $\ket{\Phi}=\frac{1}{\sqrt{p}}\sum_{j=1}^{p} \ket{j}_A\ket{j}_B$ be a maximally entangled state with an arbitrarily large dimension $p$ shared across $(A:B)$. We say that an operator $K$ is a $(D,\D)$-EPR-assisted AGSP if
\begin{itemize}
    \item[--] The Schmidt rank of $K\ket{\Phi}\ket{\psi}$ is at most $D$ times the Schmidt rank of $\ket{\Phi}\ket{\psi}$, for any bipartite state $\ket{\psi}$ across $(A:B)$, and
    \item[--] It holds that
    \begin{align}
\norm{K(\ket{\Phi}\otimes \iden)-\ket{\Phi}\otimes\ketbra{\Omega}{\Omega}}\leq \Delta,\label{eq:k1}
\end{align}
\end{itemize}
\end{definition}
\begin{rem}
The Schmidt rank of the AGSP equals $2^c$ where $c$ is the communication complexity of implementing it. We switch between $D$ and $c$ where ever it is more convenient to use one. Also, we use EPR-assistance only in our AGSP construction based on the quantum phase estimation and not the Chebyshev-AGSPs. Nevertheless, the following theorem applies generally to both cases. 
\end{rem}

\begin{thm}[Bounding entanglement spread using AGSP]\label{thm:bound ES using AGSP}
\label{thm:AGSPtospread}
Suppose there exists a $(D,\D)$-EPR-assisted AGSP with respect to a partition $(A:B)$ such that $\D<\frac{1}{4\sqrt{2}}$. Then the entanglement spread across $(A:B)$ is bounded by
\begin{align}
S_{\max}^{2(2\Delta)^{2/3}}(\Omega_A) - S_{\min}^{2(2\Delta)^{2/3}}(\Omega_A) \leq  \log D + 1.\label{eq:k5}
\end{align}
\end{thm}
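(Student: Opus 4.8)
The plan is to run the given EPR\nobreakdash-assisted AGSP $K$ on a \emph{truncated} copy of the ground state: the rank of the output will pin down $S_{\max}^{\d}(\Omega_A)$, while the truncation threshold will simultaneously force the top of the Schmidt spectrum to be light, pinning down $S_{\min}^{\d}(\Omega_A)$. Write the Schmidt decomposition $\ket{\Omega}=\sum_i\sqrt{\l_i}\,\ket{u_i}\ket{v_i}$ with $\l_1\geq\l_2\geq\cdots$, and set $\mu=(2\D)^{2/3}$ and $\d=2\mu=2(2\D)^{2/3}$; the hypothesis $\D<\tfrac{1}{4\sqrt 2}$ is exactly what guarantees $\d<1$, so the smoothing in \defref{smooth ent} is well defined. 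Let $s$ be the smallest index with $\sum_{i=1}^s\l_i\geq\mu$. All logarithms are base two, matching \eqref{eq:e6}.

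For the max-entropy bound I feed into $K$ the normalized truncation $\ket{\psi}=a^{-1}\sum_{i=1}^s\sqrt{\l_i}\,\ket{u_i}\ket{v_i}$, where $a=\sqrt{\sum_{i=1}^s\l_i}\geq\sqrt{\mu}=(2\D)^{1/3}$ and $\braket{\Omega}{\psi}=a$. The output $\ket{\chi}:=K\ket{\Phi}\ket{\psi}$ has Schmidt rank at most $D$ times that of $\ket{\Phi}\ket{\psi}$, i.e.\ at most $Dps$, and by \eqref{eq:k1} satisfies $\norm{\ket{\chi}-a\,\ket{\Phi}\ket{\Omega}}\leq\D$. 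The latter yields $|\braket{\Phi\Omega}{\chi}|\geq a-\D$ and $\norm{\chi}\leq a+\D$, so the normalized output obeys $|\braket{\widehat{\chi}}{\Phi\Omega}|\geq\frac{a-\D}{a+\D}$. Since $\widehat{\chi}$ has Schmidt rank at most $Dps$, the Young-Eckart theorem bounds this overlap by the sum of the top $Dps$ Schmidt coefficients of $\ket{\Phi}\ket{\Omega}$; because $\ket{\Phi}$ contributes $p$ equal coefficients, those top $Dps$ coefficients are precisely $\{\l_i/p\}_{i\leq Ds}$ each with multiplicity $p$, summing to $\sum_{i=1}^{Ds}\l_i$. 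Hence $\sum_{i=1}^{Ds}\l_i\geq\big(\frac{a-\D}{a+\D}\big)^2$. A direct computation gives $1-\big(\frac{a-\D}{a+\D}\big)^2=\frac{4a\D}{(a+\D)^2}\leq\frac{4\D}{a}\leq\frac{4\D}{(2\D)^{1/3}}=2^{5/3}\D^{2/3}=\d$, so the set $\{1,\dots,Ds\}$ lies in $r_{\d}(\Omega_A)$ and therefore $S_{\max}^{\d}(\Omega_A)\leq\log(Ds)$.

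For the min-entropy bound, minimality of $s$ gives $\sum_{i=1}^{s-1}\l_i<\mu<\d$, so discarding the top $s-1$ eigenvalues removes mass below the smoothing budget $\d$; thus the set $\{s,\dots,d\}$ is admissible in the optimization defining $S_{\min}^{\d}$, whence $S_{\min}^{\d}(\Omega_A)\geq-\log\l_s$. Averaging the first $s-1$ eigenvalues shows $\l_s\leq\l_{s-1}\leq\frac{1}{s-1}\sum_{i=1}^{s-1}\l_i<\frac{\mu}{s-1}$, and since $\mu\leq1$ this gives $S_{\min}^{\d}(\Omega_A)\geq\log(s-1)$ for $s\geq2$. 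Combining the two steps, the spread is at most $\log(Ds)-\log(s-1)=\log D+\log\frac{s}{s-1}\leq\log D+1$; the degenerate case $s=1$ gives $S_{\max}^{\d}\leq\log D$ and $S_{\min}^{\d}\geq0$ directly, which is even stronger. This establishes \eqref{eq:k5}.

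The delicate point is the choice of the truncation rank $s$, which must balance two competing effects: a larger $s$ improves the overlap $a=\braket{\Omega}{\psi}$ and thereby shrinks the mass $1-\big(\frac{a-\D}{a+\D}\big)^2$ lost in the max-entropy step, but it inflates the rank bound $\log(Ds)$; a smaller $s$ tightens $\log(Ds)$ but risks $a$ being too small to control the lost mass. Selecting $s$ as the first index reaching spectral mass $\mu=(2\D)^{2/3}$ is exactly the balance point: it makes $a$ large enough that the lost mass equals the smoothing parameter $\d=2(2\D)^{2/3}$, while keeping the first $s-1$ eigenvalues light enough to be discarded \emph{within the same budget} $\d$, so that the $+\log s$ inflation of $S_{\max}^{\d}$ is cancelled by the matching $-\log(s-1)$ gain in $S_{\min}^{\d}$. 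Checking that this single threshold serves both estimates consistently — and that it forces the exponent $2/3$ — is the crux of the argument; the surrounding inequalities are routine.
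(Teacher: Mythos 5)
Your proof is correct and follows essentially the same route as the paper's: truncate the ground state at cumulative spectral mass $(2\D)^{2/3}$, feed the normalized truncation (together with $\ket{\Phi}$) into $K$, apply the Young--Eckart theorem using the flatness of $\Phi_A$ to bound $S_{\max}^{2(2\D)^{2/3}}$ by $\log D$ plus the log of the truncation rank, and then cancel that rank against $S_{\min}^{2(2\D)^{2/3}}$ via the minimality of the threshold. The only differences are cosmetic bookkeeping (you bound the normalized overlap by $\tfrac{a-\D}{a+\D}$ and absorb the slack into $\log\tfrac{s}{s-1}\le 1$, where the paper uses $1-\tfrac{2\D}{\sqrt{\e'}}$ and $\log(2\e)\le 1$), including the same two-case split on whether $\l_1$ exceeds the threshold.
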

\begin{proof}
Let $\e=(2\D)^{2/3}$. Consider the Schmidt decomposition
$$\ket{\Omega}= \sum_i\sqrt{\lambda_i}\ket{i}_A\ket{i}_{B},$$
where $\{\lambda_i\}$ are in descending order. Let $b$ be the smallest integer such that $\varepsilon':=\sum_{i<b}\lambda_i\geq \varepsilon$ and define
$$\ket{\Omega_{\text{heavy}}}=\sum_{1\leq i<b}\sqrt{\frac{\lambda_i}{\varepsilon'}}\ket{i}_A\ket{i}_{B},$$
$$\ket{\Omega_{\text{light}}}=\sum_{i\geq b}\sqrt{\frac{\lambda_i}{1-\varepsilon'}}\ket{i}_A\ket{i}_{B}.$$
Hence, $\ket{\Omega}=\sqrt{\e'}\ket{\Omega_{\text{heavy}}}+\sqrt{1-\e'}\ket{\Omega_{\text{light}}}$. Let $\ket{\Phi}$ have the Schmidt decomposition 
$$\ket{\Phi}=\frac{1}{\sqrt{p}}\sum_{j=1}^p\ket{j}_A\ket{j}_{B},$$ for some integer $p$.
From the closeness of $K$ to $\ket{\Omega}$ as in \eqref{eq:k1} and the identity $\braket{\Omega}{\Omega_{\text{heavy}}}=\sqrt{\varepsilon'}$, we have
\begin{equation}
\label{eq:closetogr}
\|K\ket{\Phi}\ket{\Omega_{\text{heavy}}}-\ket{\Phi}\otimes\ket{\Omega}\braket{\Omega}{\Omega_{\text{heavy}}}\|=\|K\ket{\Phi}\ket{\Omega_{\text{heavy}}}-\sqrt{\varepsilon'}\ket{\Phi}\ket{\Omega}\|\leq \Delta.
\end{equation}
The Schmidt rank of $\frac{1}{\sqrt{\varepsilon'}}K$ is the same as $K$ which equals $D$. The Schmidt rank of $\ket{\Omega_{\text{heavy}}}$ is $b-1$. Hence, Schmidt rank of $\frac{1}{\sqrt{\varepsilon'}}K\ket{\Phi}\ket{\Omega_{\text{heavy}}}$ is at most $p(b-1)D$. From \eqref{eq:closetogr}, we have that
\begin{align}
\left|\bra{\Phi}\bra{\Omega}\frac{\left(\frac{1}{\sqrt{\varepsilon'}}K\right)\ket{\Phi}\ket{\Omega_{\text{heavy}}}}{\Norm{\left(\frac{1}{\sqrt{\varepsilon'}}K\right)\ket{\Phi}\ket{\Omega_{\text{heavy}}}}}\right|\geq 1-2\frac{\Delta}{\sqrt{\varepsilon'}}.
\end{align}
Following \cite{one-dimensional_area_law}, we use the Young-Eckart theorem (\lemref{Young-Eckart}) along with the above bound. This implies that the sum of the largest $p(b-1)D$ eigenvalues of $\Omega_A\otimes \Phi_A$ is at least $(1-\frac{2\Delta}{\sqrt{\varepsilon'}})^2$. However, since the eigenvalues of $\Phi_A$ are all equal to $\frac{1}{p}$, this sum is equal to $\sum_{i=1}^{(b-1)D}\lambda_i$.  This is the key point in our proof where we use the fact that $\ket{\Phi}$ is maximally entangled; replacing it with a different state, such as an embezzling state, would cause this step to fail. Hence, we have  $$\sum_{i=1}^{(b-1)D}\lambda_i \geq (1-\frac{2\Delta}{\sqrt{\varepsilon'}})^2\geq 1-\frac{4\Delta}{\sqrt{\varepsilon'}}.$$
From the definition of the smooth max-entropy \eqref{eq:k4}, we see that this statement is equivalent to
\begin{align}
S_{\max}^{\frac{4\Delta}{\sqrt{\varepsilon'}}}(\Omega_A) \leq \log D +  \log(b-1).\nn
\end{align}
Since $S_{\max}^{\frac{4\Delta}{\sqrt{\varepsilon}}}(\Omega_A)\leq S_{\max}^{\frac{4\Delta}{\sqrt{\varepsilon'}}}(\Omega_A)$, we conclude that
\begin{align}
S_{\max}^{\frac{4\Delta}{\sqrt{\varepsilon}}}(\Omega_A) \leq \log D +  \log (b-1).\nn
\end{align}
Now, consider the following two cases:
\begin{enumerate}
\item[1)] {\bf $\varepsilon\geq \lambda_1$:} From the definition of $b$, $$\varepsilon'=\sum_{i<b}\lambda_i \leq \varepsilon + \lambda_b \leq 2\varepsilon.$$ 
Since ${\l_i}$ are arranged in the descending order, we also have $\e'\geq (b-1)\lambda_{b-1}$. This implies 
$$\log(b-1) \leq \log \varepsilon' + \log\frac{1}{\lambda_{b-1}}\leq \log (2\varepsilon) + \log\frac{1}{\lambda_{b}}.$$ By \defref{smooth ent}, $\log\frac{1}{\lambda_{b}} = S_{\min}^{\e'}(\Omega_A) \leq S_{\min}^{2\e}(\Omega_A)$. From this, we conclude that
$$S_{\max}^{\frac{4\Delta}{\sqrt{\varepsilon}}}(\Omega_A) - S_{\min}^{2\varepsilon}(\Omega_A) \leq  \log D + \log (2\varepsilon) \leq \log D + 1.$$
\item[2)] {\bf $\varepsilon < \lambda_1$:} In this case, $b=2$. Thus, $S_{\max}^{\frac{4\Delta}{\sqrt{\varepsilon}}}(\Omega_A) \leq \log D.$ Since $S_{\min}^{2\varepsilon}(\Omega_A)\geq 0$, we have  
$$S_{\max}^{\frac{4\Delta}{\sqrt{\varepsilon}}}(\Omega_A) - S_{\min}^{2\varepsilon}(\Omega_A)\leq \log D \leq \log D +1.$$
\end{enumerate}
By plugging in the value of $\e=(2\D)^{2/3}$, we arrive at \eqref{eq:k5} which concludes the proof.
\end{proof}

\thmref{bound ES using AGSP} implies that we can bound the entanglement spread in the ground state by finding an appropriate AGSP. In the next sections, we achieve this using two distinct approaches. First in \secref{AGSP form HS}, we use the phase estimation algorithm to construct an AGSP for a gapped Hamiltonian on an arbitrary graph with $D=O(|\partial A|/\g)$. Next in \secref{AGSP for lattices}, we find an AGSP using the Chebyshev polynomial with a quadratically improved scaling of $D=O(\sqrt{|\partial A|/\g})$. 
\section{AGSP for general graphs using quantum phase estimation}\label{sec:AGSP form HS}
In this section, we describe a communication protocol between Alice and Bob that allows them to approximately implement the evolution operator $e^{-itH}=e^{-it(H_A+H_{B}+H_{\partial A})}$ using $\tilde{O}(\norm{H_{\partial A}} t)$ qubits of communication. The conventional Hamiltonian simulation techniques work in the Schr\"odinger picture. Naively using these techniques results in  communication complexity that scales with $\norm{H}$ instead of $\norm{H_{\partial_A}}$. To get around this issue, we instead use the recent Hamiltonian simulation algorithm in the interaction picture \cite{low2018Hamiltonian} along with the Linear Combination of Unitaries (LCU) method \cite{berry2015simulating}.
\subsection{Hamiltonian simulation in the interaction picture}\label{sec:Interaction picture protocol}
 In the Hamiltonian simulation, the goal is to prepare the state $\ket{\psi(t)}=e^{-it(H_A+H_{B}+H_{\partial A})}\ket{\psi(0)}$ for any initial state $\psi(0)$. This is conventionally done by directly implementing the unitary $e^{-it(H_A+H_{B}+H_{\partial A})}$. In the interaction picture, we work in the rotating frame $\ket{\psi_I(t)}=e^{it(H_A+H_{B})} \ket{\psi(t)}$. There, the evolution of a time-independent Hamiltonian $H$ is transformed to the evolution by a time-dependent Hamiltonian \begin{align}
 H_I(t)=e^{it(H_A+H_{B})}H_{\partial A}e^{-it(H_A+H_{B})}.\label{eq:t1}
 \end{align} 
We can divide the evolution of duration $t$ to $L$ shorter segments of length $\tau=t/L$. The state $\ket{\psi(t)}$ can be expressed in this picture by 
 \begin{align}
     \ket{\psi(t)}=e^{-it(H_A+H_{B})}\ket{\psi_I(t)} &=\left(e^{-i\tau(H_A+H_{B})} \cT\left[e^{-i\int_0^{\tau} H_I(s)ds}\right]\right)^L\ket{\psi(0)},\label{eq:t2}
 \end{align}
where $\cT\left[\exp\left({-i\int_0^{\tau} H_I(s)ds}\right)\right]$ is the time-ordered propagator. One advantage of working in the interaction picture is that $\norm{H_I(t)}=\norm{H_{\partial A}}$. Hence, the cost of implementing the propagation operator $\cT\left[\exp\left({-i\int_0^{\tau} H_I(s)ds}\right)\right]$ scales with $\norm{H_{\partial A}}$ instead of $\norm{H_{A}}$.

\begin{lem}[cf. \cite{low2018Hamiltonian}, Lemma 5]\label{lem:expansion for the time-ordered op}

The time-ordered propagator can be written as 
\begin{align}
\cT\left[e^{-i\int_0^{\tau} H_I(s)ds}\right]=\lim_{M,K\rightarrow \infty}\sum_{k=0}^{K}\frac{(-i\tau)^k}{M^k} \sum_{0\leq m_1 <\dots < m_k<M} H_I(m_k \tau/M)\cdots H_I(m_1 \tau/M).\label{eq:t4}
\end{align}
\end{lem}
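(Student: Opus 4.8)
The plan is to recognize the claimed identity as the Dyson series expansion of the time-ordered propagator, with each nested integral replaced by a Riemann sum over a discretized ordered simplex. Recall that $\cT[e^{-i\int_0^\tau H_I(s)ds}]$ is by definition the unique solution $U(\tau)$ of the operator ODE $\partial_\tau U(\tau)=-iH_I(\tau)U(\tau)$ with $U(0)=\iden$, whose standard Dyson representation is
\begin{align}
\cT\left[e^{-i\int_0^\tau H_I(s)ds}\right]=\sum_{k=0}^\infty (-i)^k\int_{0\leq s_1\leq\cdots\leq s_k\leq\tau} H_I(s_k)\cdots H_I(s_1)\,ds_1\cdots ds_k.\nn
\end{align}
Because $H_I(s)=e^{is(H_A+H_B)}H_{\partial A}e^{-is(H_A+H_B)}$ is a unitary conjugate of the fixed bounded operator $H_{\partial A}$, we have $\norm{H_I(s)}=\norm{H_{\partial A}}$ for all $s$, and the $k$-th term is bounded in operator norm by $\norm{H_{\partial A}}^k$ times the volume $\tau^k/k!$ of the simplex. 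Hence the series converges absolutely, with partial sums uniformly bounded by $e^{\tau\norm{H_{\partial A}}}$; this bound will be the workhorse for all the limit interchanges below.

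First I would fix the Dyson order $k$ and discretize. Partition $[0,\tau]$ into $M$ equal cells with nodes $m\tau/M$, $m=0,\dots,M-1$, so that the ordered simplex $\{0\leq s_1\leq\cdots\leq s_k\leq\tau\}$ is approximated by the strictly ordered grid tuples $0\leq m_1<\cdots<m_k<M$, each carrying cell volume $(\tau/M)^k$. This turns the $k$-th integral into
\begin{align}
\frac{(-i\tau)^k}{M^k}\sum_{0\leq m_1<\cdots<m_k<M}H_I(m_k\tau/M)\cdots H_I(m_1\tau/M),\nn
\end{align}
which is exactly the $k$-th summand in the claimed expression. Convergence of this Riemann sum to the integral as $M\to\infty$ follows from norm-continuity (indeed uniform continuity on the compact interval $[0,\tau]$) of $s\mapsto H_I(s)$, which in turn follows from strong continuity of the one-parameter unitary group $e^{is(H_A+H_B)}$ together with boundedness of $H_{\partial A}$. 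The only subtlety is that I use strict inequalities $m_1<\cdots<m_k$, thereby discarding the diagonal tuples where some indices coincide; but these number only $O(M^{k-1})$ and each contributes $O(\norm{H_{\partial A}}^k(\tau/M)^k)$, so their total contribution is $O(\norm{H_{\partial A}}^k\tau^k/M)\to0$ and may be dropped.

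The main obstacle is justifying the joint limit $M,K\to\infty$ and the interchange with the sum over $k$. A clean way to handle this is to observe that, for fixed $M$, the inner sum over $k$ terminates at $k=M$ and coincides with the ordered product $\prod_{m=M-1}^{0}\bigl(\iden-\frac{i\tau}{M}H_I(m\tau/M)\bigr)$ obtained by choosing, from each factor, either $\iden$ or the linear term; consequently the $K\to\infty$ limit is trivial for each fixed $M$ (any $K\geq M$ already reproduces the full product). It then remains to show that this first-order product formula converges as $M\to\infty$ to $\cT[e^{-i\int_0^\tau H_I(s)ds}]$, which is the classical convergence of the time-ordered Euler product and can be proven by a standard telescoping estimate comparing $\prod_m(\iden-\frac{i\tau}{M}H_I(m\tau/M))$ with the exact propagator, the per-step error being $O((\tau/M)^2)$ and accumulating to $O(\tau^2/M)$. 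Alternatively, keeping the two limits separate, the uniform tail bound $\sum_{k>K}(\tau\norm{H_{\partial A}})^k/k!$ (independent of $M$) lets me interchange $\lim_M$ with the sum and then send $K\to\infty$; either route completes the proof.
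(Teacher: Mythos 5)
The paper does not prove this lemma itself; it defers entirely to Lemma~5 of \cite{low2018Hamiltonian}, and your argument is a correct, self-contained reconstruction along the same lines as that reference: Dyson series, Riemann-sum discretization of the ordered simplex, and the identification of the fixed-$M$ sum with the first-order Euler product $\prod_{m=M-1}^{0}\bigl(\iden-\tfrac{i\tau}{M}H_I(m\tau/M)\bigr)$, which cleanly disposes of the joint $M,K$ limit. The uniform bound $\norm{H_I(s)}=\norm{H_{\partial A}}$ and norm-continuity of $s\mapsto H_I(s)$ are exactly the ingredients needed to justify the limit interchanges, so no gap remains.
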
 
The order of the $M,K$ limit and the speed of convergence will not matter to us since we will see that our communication cost is completely independent of $M,K$.

The boundary term $H_{\partial A}$ can be decomposed as a sum of unitary operators, i.e., $H_{\partial A} = \sum_{j=1}^{J} \beta_{j} u^{(A)}_{j} \ot u^{(B)}_{j}$, where $u^{(A)}_{j}$ ($u^{(B)}_{j}$) acts on Alice's (Bob's) spins. We can always absorb the phase of $\b_j$ in $u^{(A)}_j$ and assume $\b_j>0$. Similarly, the interaction Hamiltonian is
\begin{align}
    H_I(m_k \tau/M)=\sum_{j =1}^{J} \beta_{j }(e^{im_k H_A\tau/M}u^{(A)}_{j }e^{-im_k H_A\tau/M}) \ot (e^{im_k H_B\tau/M  }u^{(B)}_{j }
    e^{-im_k H_B\tau/M })\nn.
\end{align}
By plugging this into \eqref{eq:t4}, we see that the time-ordered propagator can be expressed as a linear combination of unitary operators. For convenience, we define a collective index set 
\begin{align}
I_{M,K}=\{(k,m_1,\dots,m_k,j_1,\dots,j_k):  0 \leq k \leq K, 0\leq m_1<\dots<m_k<M, j_1,\dots,j_k\in [J] \}\nn.
\end{align} 
For some $\ell=(k,m_1,\dots,m_k,j_1,\dots,j_k)$, define 
\begin{gather}
\a_{\ell}=(\tau^k/M^k) \b_{j_k}\dots \b_{j_1}, \nn\\
v^{(A)}_{\ell}=(-i)^k(e^{im_k H_A \tau/M }u^{(A)}_{j_k}e^{-im_k H_A \tau/M })\dots (e^{im_1 H_A\tau/M}u^{(A)}_{j_1}e^{-im_1  H_A\tau/M}),\nn\\
v^{(B)}_{\ell}= (e^{im_kH_B \tau/M }u^{(B)}_{j_k}e^{-im_k H_B\tau/M })\dots (e^{im_1 H_B\tau/M }u^{(B)}_{j_1}e^{-im_1 H_B\tau/M }).\nn
\end{gather}

Note that 
\begin{equation}
\label{eq:normupbound}
\lim_{M,K\rightarrow \infty}\sum_{\ell \in I_{M,K}}\a_{\ell}= \exp(\tau \sum_{j=1}^J \b_j).    
\end{equation}
Using this notation, \lemref{expansion for the time-ordered op} can be expressed as
\begin{align}
    \cT\left[e^{-i\int_0^{\tau} H_I(s)ds}\right]=\lim_{M,K\rightarrow \infty}\sum_{\ell \in I_{M,K}}\a_{\ell}\ v^{(A)}_{\ell} \ot v^{(B)}_{\ell}.\label{eq:t6}
\end{align}
\subsection{Communication protocol for Hamiltonian simulation}

Here, we show how Alice and Bob can implement the evolution operator \eqref{eq:t6} using unlimited shared EPR pairs and quantum communication. First, we see how using $O\left( \log(1/\e) \right)$ qubits, they can perform a reflection about the maximally entangled state $\ket{\Phi_p}=\frac{1}{\sqrt{p}} \sum_{j=1}^p \ket{j}\ket{j}$ with an arbitrary dimension $p$ up to an error $\e$. To show this, we slightly modify the EPR testing protocol (i.e. performing the two-outcome measurement $\{\Phi_p,\iden-\Phi_p\}$) of \cite{Aharonov2014CounterExampleToAreaLaw}.

\begin{thm}[Reflection about $\ket{\Phi_p}$] \label{thm:protocol for reflection around the MES}
For any $p$ and any $\e>0$, there exists a protocol for performing $\iden-2 \Phi_p$, the reflection about the maximally entangled state, using $O\left(\log(1/\e)\right)$ qubits of communication. 
\end{thm}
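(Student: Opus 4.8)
The plan is to reduce the reflection $\iden-2\Phi_p$ to the EPR-testing protocol of \cite{Aharonov2014CounterExampleToAreaLaw}, which is built from a quantum expander, and then to promote the resulting two-outcome measurement into a reflection using the measurement/reflection equivalence noted in the introduction. First I would fix a \emph{symmetric} quantum expander $\{U_1,\dots,U_k\}$ on $\cH_A=\mathbb{C}^p$ (a set of unitaries closed under inverse) and consider the Hermitian operator $M=\frac1k\sum_{i=1}^k U_i\ot \bar U_i$ acting on $\cH_A\ot\cH_B$. The key algebraic facts are that $(U\ot\bar U)\ket{\Phi_p}=\ket{\Phi_p}$ for every unitary $U$, so $M\ket{\Phi_p}=\ket{\Phi_p}$, while the expander property gives $\norm{M-\Phi_p}=\norm{M|_{\ket{\Phi_p}^\perp}}\le\lambda$ for the second singular value $\lambda<1$. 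Thus $M$ is a block-accessible proxy for the projector $\Phi_p$, with a spectral gap separating the eigenvalue $1$ (on $\ket{\Phi_p}$) from the rest of the spectrum (contained in $[-\lambda,\lambda]$). Constant-degree symmetric expanders with constant gap exist independently of $p$, which is what makes the whole scheme dimension-free.

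Next I would exhibit $M$ as a block encoding that Alice and Bob can apply cheaply. They share a control register in the maximally entangled state $\ket{+_k}=\frac{1}{\sqrt{k}}\sum_{i=1}^k\ket{i}_A\ket{i}_B$, which costs $O(\log k)$ qubits of communication to distribute once. Conditioned on the control value $i$, Alice applies $U_i$ and Bob applies $\bar U_i$, both purely local; projecting (or reflecting) the control back onto $\ket{+_k}$ realizes $M=(\bra{+_k}\ot\iden)\,V\,(\ket{+_k}\ot\iden)$ with $V=\sum_i\ketbra{i}{i}\ot U_i\ot\bar U_i$. The only communication is in preparing and reflecting about the control register, and since $\ket{+_k}$ is itself a maximally entangled state but of the \emph{small} dimension $k$, one party can simply send its half, perform the reflection locally, and send it back, at cost $O(\log k)$ per use.

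Finally, to produce the reflection itself I would apply an eigenvalue transformation to the block-encoded $M$: choose a polynomial $P$ of degree $d=O\!\left(\frac{1}{1-\lambda}\log(1/\e)\right)=O(\log(1/\e))$ that sends the eigenvalue $1\mapsto-1$ and the interval $[-\lambda,\lambda]\mapsto+1$ to within $\e$ (a standard rescaled step-function approximation whose degree depends only on the constant gap $1-\lambda$ and on $\log(1/\e)$), and implement $P(M)$ via $d$ queries to the block encoding. Each query costs $O(\log k)=O(1)$ qubits of communication, so the total is $O(\log(1/\e))$. Equivalently, and closer to ``slightly modifying the test,'' one can skip the polynomial and instead take an expander with $\lambda\le\e$ (so that $k=\mathrm{poly}(1/\e)$), run the coherent test once, apply a $-1$ phase conditioned on acceptance, and uncompute the control; the reflection so produced has operator-norm error $O(\lambda)\le O(\e)$ and communication again $O(\log k)=O(\log(1/\e))$. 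The main obstacle is the error analysis of this last step: I must verify that coherently uncomputing the control register on inputs orthogonal to $\ket{\Phi_p}$ introduces only $O(\lambda)$ operator-norm error, and (in the polynomial route) that the eigenvalue-transformation/LCU circuit composes correctly in the distributed model with the reflections about the shared control $\ket{+_k}$ remaining the sole source of communication. Everything else reduces to the spectral gap of the expander together with a clean polynomial approximation.
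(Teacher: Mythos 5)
Your proposal is correct, and your ``equivalently, one can skip the polynomial'' variant is exactly the paper's proof: the paper fixes a quantum expander $\{U_1,\dots,U_d\}$ with $\NORM{\frac{1}{d}\sum_j U_j\ot U_j^* - \Phi_p}\leq \e$ and $\log d = O(\log(1/\e))$, conjugates a local reflection about the uniform superposition $\ket{s}=\frac{1}{\sqrt d}\sum_j\ket{j}$ by the controlled-unitaries $V=\sum_j\ketbra{j}{j}\ot U_j$ and $V^*$, and uncomputes, with the $O(\log(1/\e))$-qubit control register sent once in each direction; the error analysis is the $O(\e)$ operator-norm bound you flag as the remaining step, carried out on the decomposition $\ket{\Psi}=\sqrt{c}\ket{\Phi_p}+\sqrt{1-c}\ket{\Phi_p^{\perp}}$ using $(U\ot U^*)\ket{\Phi_p}=\ket{\Phi_p}$. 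Your primary route --- a constant-degree symmetric expander viewed as a block-encoding of a Hermitian $M$ with constant gap, followed by a degree-$O(\log(1/\e))$ eigenvalue transformation --- is a genuinely different and also workable path to the same $O(\log(1/\e))$ bound; it trades the paper's single use of a $\mathrm{poly}(1/\e)$-degree expander for $O(\log(1/\e))$ uses of an $O(1)$-degree one, at the cost of needing the symmetrization (so that $M$ is Hermitian) and a distributed QSVT/LCU composition argument that the paper avoids entirely. Neither route has a gap in its outline; only the uncomputation error analysis (which the paper supplies explicitly) needs to be written out.
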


\begin{proof}
Following \cite{Aharonov2014CounterExampleToAreaLaw}, we use quantum expanders to approximately construct $2\Phi_p-\iden$. A set of unitary operators $\{U_1,U_2,\dots U_{d}\}$ with each $U_j \in L(\bbC^p)$ is a $(p,d,\e)$ quantum expander if 
\begin{align}
\NORM{\frac{1}{d}\sum_{j=1}^{d} U_j \ot U^*_j-\Phi_p} \leq \e.\label{eq:t8}
\end{align}
There are constructions of quantum expanders that are independent of $p$ and achieve error $\e$ with $\log d=O\left(\log(1/\e)\right)$ \cite{hastings2007expander,harrow2007expander,harrow2009expander2}. Such quantum expanders can be used to reflect about $\Phi_p$ as in the following protocol with communication cost $2d$. 

\begin{enumerate}
    \item[0.] Alice and Bob share a state $\ket{\Psi}$ in registers $A,B$. Let $a$ be an ancillary register that Alice and Bob exchange. Their goal is to apply $(2\Phi_p-\iden)$ on $\ket{\Psi}$.
    \item[] \emph{First, they prepare $\frac{1}{\sqrt{d}}\sum_{j=1}^d \ket{j}\ U_j\ot U_j^*\ket{\Psi}$ in steps $1$ and $2$:}

    \item Alice prepares the state $\ket{s}=\frac{1}{\sqrt{d}}\sum_{j=1}^d \ket{j}$ in register $a$. She then performs $V=\sum_{j=1}^d\ketbra{j}{j}\ot U_i$ on her registers $a$ and $A$. Next, she sends register $a$ to Bob.
    \item Bob applies $V^*=\sum_{j=1}^d\ketbra{j}{j}\ot U^*_j$ on registers $a$ and $B$.
    \item Bob performs the reflection operator $2\ketbra{s}{s}-\iden$ on register $a$.
    \item[] \emph{Then they uncompute steps 1 and 2:}
    \item Bob applies $V$ on his registers and sends $a$ back to Alice.
    \item Alice performs $V^*$ on her registers and discards register $a$.
\end{enumerate}
If they start with the state $\ket{\Psi}^{AB}=\sqrt{c}\ket{\Phi_p} + \sqrt{1-c} \ket{\Phi_p^{\perp}}$, after steps $1$-$3$, the state is
\begin{align}
    \left(2\ketbra{s}{s}-\iden\right)\cdot \frac{1}{\sqrt{d}} \sum_{j=1}^d \ket{j}\ U_j \ot U^*_j \ket{\Psi} &= \frac{2}{d}\ket{s}\sum_{j=1}^d U_j \ot U^*_j \ket{\Psi}-\frac{1}{\sqrt{d}} \sum_{j=1}^d \ket{j}\ U_j \ot U^*_j \ket{\Psi}\nn\\
    &\approx_{\e} 2 \sqrt{c} \ket{s}\ket{\Phi_p}-\frac{1}{\sqrt{d}} \sum_{j=1}^d \ket{j}\ U_j \ot U^*_j \ket{\Psi},
\end{align}
where we used \eqref{eq:t8} to get to the last line. By the end of step $5$, we have 
\begin{align}
    V^* \ot V \left(2 \sqrt{c} \ket{s}\ket{\Phi_p}-\frac{1}{\sqrt{d}} \sum_{j=1}^d \ket{j}\ U_j \ot U^*_j \ket{\Psi} \right)&= 2\sqrt{\frac{c}{d}}\sum_{j=1}^d \ket{j}\ U^*_j \otimes U_j \ket{\Phi_p}-\ket{s} \ket{\Psi}\nn\\
    &=\ket{s}\left(2\sqrt{c} \ket{\Phi_p}-\ket{\Psi}\right)\nn\\
    &=\ket{s}\left(\sqrt{c}\ket{\Phi_p}-\sqrt{1-c} \ket{\Phi_p^{\perp}}\right)\nn\\
    &=\ket{s} \ot (2\Phi_p-\iden) \ket{\Psi}.
\end{align}
We used the fact that $U_j^* \ot U_j \ket{\Phi_p}=\iden \ot U_j^{\dagger} U_j \ket{\Phi_p}=\ket{\Phi_p}$ to get the second line. 
\end{proof}
\begin{thm}[Communication protocol for Hamiltonian simulation]\label{thm:Communication protocol for Hamiltonian simulation} There exists a communication protocol between Alice and Bob, summarized in \protoref{H_sim protocol}, that uses a shared maximally entangled state $\ket{\Phi_p}$ for an arbitrarily large $p$ and $O(t|\partial A|\log (t|\partial A|/\e))$ extra qubits of communication and implements an operator $W_t$ such that  
\begin{align}
\Norm{W_t\ket{\Phi_p} \ket{\psi}-\ket{\Phi_p} e^{-it(H_A+H_{\partial A}+H_B)}\ket{\psi}}\leq \e.
\end{align}
\end{thm}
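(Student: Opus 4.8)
The plan is to combine three ingredients already in place: the exact interaction-picture segmentation \eqref{eq:t2}, the linear-combination-of-unitaries (LCU) form of the one-segment time-ordered propagator (\lemref{expansion for the time-ordered op}), and the cheap EPR-assisted reflection about a maximally entangled state (\thmref{protocol for reflection around the MES}). Using \eqref{eq:t2}, implementing $e^{-itH}$ reduces to $L=t/\tau$ repetitions of $e^{-i\tau(H_A+H_B)}$ followed by the one-segment propagator $W_\tau=\cT[e^{-i\int_0^\tau H_I(s)ds}]$. The operators $e^{-i\tau(H_A+H_B)}=e^{-i\tau H_A}\otimes e^{-i\tau H_B}$ are local and require no communication, so all the work lies in implementing $W_\tau$. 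I would fix $\tau$ so that $\tau\sum_j\beta_j=\Theta(1)$; since each boundary term decomposes into $O(1)$ unitaries with bounded coefficients we have $\sum_j\beta_j=O(|\partial A|)$, giving $\tau=\Theta(1/|\partial A|)$ and $L=\Theta(t|\partial A|)$.

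For a single segment I would realize $W_\tau=\sum_\ell\a_\ell\,v^{(A)}_\ell\otimes v^{(B)}_\ell$ by LCU with oblivious amplitude amplification. The essential departure from the textbook LCU is that the non-uniform prepare-state $\ket{\a}\propto\sum_\ell\sqrt{\a_\ell}\ket{\ell}$ is replaced by the shared maximally entangled state $\ket{\Phi_p}$: by inflating the index set so that index $\ell$ carries multiplicity proportional to $\a_\ell$ and letting $p\to\infty$ (which is free, as communication is independent of $p$ and of the limit $M,K\to\infty$), a uniform superposition reproduces $\ket{\a}$. The select operator $\mathrm{SEL}=\sum_\ell\ketbra{\ell}{\ell}\otimes v^{(A)}_\ell\otimes v^{(B)}_\ell$ then factorizes across the cut: controlled on the perfectly correlated halves of $\ket{\Phi_p}$, Alice applies $v^{(A)}_\ell$ and Bob applies $v^{(B)}_\ell$, again with no communication. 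A direct computation gives $\bra{\Phi_p}\mathrm{SEL}\ket{\Phi_p}=\frac{1}{s} W_\tau$ with $s=\sum_\ell\a_\ell=e^{\tau\sum_j\beta_j}$ by \eqref{eq:normupbound}, so $W_\tau$ is block-encoded with subnormalization $s$.

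Next I would choose $\tau$ so that $s=2$ (absorbing the integrality of $L$ by padding the final, shorter segment with the standard single-qubit amplitude trick) and run one step of oblivious amplitude amplification $G=-\,\mathrm{SEL}\cdot R\cdot\mathrm{SEL}^\dagger\cdot R\cdot\mathrm{SEL}$, where $R=2\ketbra{\Phi_p}{\Phi_p}\otimes\iden-\iden$ acts as the identity on the system registers. Since $W_\tau$ is exactly unitary, oblivious amplitude amplification is exact, so $G$ applies $W_\tau$ and returns the ancilla to $\ket{\Phi_p}$. All communication is confined to the two reflections $R$ per segment, each of which is $-(\iden-2\Phi_p)$ on the ancilla and hence implementable up to error $\e'=\Theta(\e/L)$ at cost $O(\log(1/\e'))=O(\log(t|\partial A|/\e))$ qubits by \thmref{protocol for reflection around the MES}. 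Summing $O(1)$ reflections over $L=\Theta(t|\partial A|)$ segments gives total communication $O(t|\partial A|\log(t|\partial A|/\e))$, while the per-reflection errors accumulate additively by the triangle inequality to $O(L\e')\le\e$, matching the claim.

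The step I expect to be the main obstacle is justifying rigorously that the genuinely non-uniform prepare-state $\ket{\a}$ may be swapped for a maximally entangled $\ket{\Phi_p}$ without loss: that index-inflation faithfully reproduces the weights $\a_\ell$, that the infinite $M,K\to\infty$ sum in \lemref{expansion for the time-ordered op} causes no difficulty (it only enlarges $p$, to which both the reflection cost of \thmref{protocol for reflection around the MES} and the final bound are insensitive), and that oblivious amplitude amplification runs correctly with the reflection $R$ acting only on the ancilla. A secondary point is the error bookkeeping: I must verify that reusing a single shared copy of $\ket{\Phi_p}$ across all $L$ segments accumulates error only additively, which again follows from the triangle inequality applied to the per-reflection approximation.
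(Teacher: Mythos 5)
Your proposal is correct and follows essentially the same route as the paper: interaction-picture segmentation with $L=t\sum_j\beta_j/\ln 2=O(t|\partial A|)$ segments, uniformizing the LCU coefficients by index inflation so the prepare-state becomes a maximally entangled state, one round of oblivious amplitude amplification with subnormalization $2$, and the expander-based reflection of \thmref{protocol for reflection around the MES} at cost $O(\log(t|\partial A|/\e))$ per segment. The "main obstacle" you flag is handled in the paper exactly as you suggest, by rounding each $\a_\ell$ to a multiple of $k_\d=2^{-\lceil\log(1/\d)\rceil}$ and letting $\d\to 0$ as $M,K\to\infty$, which only enlarges $p$ and leaves the communication cost unchanged.
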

\begin{proof}[Proof of \thmref{Communication protocol for Hamiltonian simulation}]
The evolution is divided into $L$ segments of length $\tau$ as in \eqref{eq:t2}. In each segment, the operator $e^{-i\tau(H_A+H_{B})}=e^{-i\tau H_A} e^{-i\tau H_B}$ can be implemented without any communications. Thus we focus on the cost of performing the time-ordered propagator $\cT\left[\exp\left({-i\int_0^{\tau} H_I(s)ds}\right)\right]$. 

Following \cite{low2018Hamiltonian,berry2015simulating}, we use the LCU method to simulate the time-ordered operator given as a sum of unitaries in \eqref{eq:t6}. In the original LCU algorithm, to implement a sum of unitaries such as $\sum_{\ell \in I_{M,K}}\a_{\ell}\ v^{(A)}_{\ell} \ot v^{(B)}_{\ell}$. Alice and Bob need to share (and later reflect about) the state $\sum_{\ell\in I_{M,K}} \sqrt{\a_{\ell}} \ket{\ell} \ket{\ell}$. In general sharing such a state results in extra entanglement spread between the parties. To avoid this, we modify the sum in \eqref{eq:t6} so that all $\a_{\ell}$ are equal and Alice and Bob can instead use their shared maximally entangled state $\ket{\Phi_p}$ which has zero entanglement spread. 

We achieve this by rounding off the coefficients $\a_{\ell}$ to the nearest multiple of $k_{\d}=2^{-\lceil \log(\d^{-1}) \rceil}$ denoted by $\tilde{\a}_{\ell}$ such that $|\a_{\ell}-\tilde{\a}_{\ell}|\leq \d \ll 1$. The choice of $\d$ depends on $M$ and $K$. In particular as $M,K\rightarrow \infty$, we have $\d\rightarrow 0$. We can re-express the sum in \eqref{eq:t6} by repeating each term $v^{(A)}_{\ell} \ot v^{(B)}_{\ell}$ for $\tilde{\a}_{\ell}/k_{\d}$ times. This means for a fixed $M$, $K$, and $\d$, Alice and Bob wish to implement the sum 
\begin{align}
 k_{\d} \cdot \sum_{\ell=1}^{p}\ v^{(A)}_{\ell} \ot v^{(B)}_{\ell}
\end{align}
with some extended set of indices $\ell$ with size $p\leq \sum_{\ell \in I_{M,K}}\lceil \a_{\ell} /k_{\d}\rceil<e^{\tau \sum_{j=1}^J \b_j}/k_{\d}$ (using \eqref{eq:normupbound}). The simulation protocol consists of the following steps summarized in \protoref{H_sim protocol}:
\begin{enumerate}
    \item Alice and Bob perform the following operator on the state $\ket{\psi}$ and the maximally entangled state $\ket{\Phi_p}=\frac{1}{\sqrt{p}}\sum_{\ell=1}^{p} \ket{\ell}\ket{\ell}$ shared between them:
\begin{align}
  \SEL & =\SEL_A \otimes \SEL_B\nn
  \\
  \SEL_A & =\sum_{\ell=1}^{p} \ketbra{\ell}{\ell}^{(A)} \ot v^{(A)}_{\ell} \nn
  \\
  \SEL_B & = \sum_{\ell=1}^{p} \ketbra{\ell}{\ell}^{(B)} \ot v^{(B)}_{\ell}.\nn
\end{align}
To implement this, Alice (Bob) applies the unitary $v_{\ell}^{(A)}$ ($v_{\ell}^{(B)}$) on their spins conditioned on register $\ket{\ell}^{(A)}$ ($\ket{\ell}^{(B)}$). Hence, the operator $\SEL$ can be implement by the parties only using local unitaries. Their state after this step is
\begin{align}
  \SEL \ket{\Phi_p}\ket{\psi}= \frac{1}{\sqrt p}
  %\frac{1}{(\sum_{\ell \in I_{M,K}}\a_{\ell})^{1/2}}
  \sum_{\ell=1}^{p} \sqrt{k_{\d}}\ \ket{\ell}^{(A)}\ket{\ell}^{(B)} \ot \left(v^{(A)}_{\ell}\ot v^{(B)}_{\ell} \right) \ket{\psi}\label{eq:t7}
\end{align}
     \item  Next, the parties implement the oblivious amplitude amplification to \emph{turn} the state $\SEL \ket{\Phi_p}\ket{\psi}$ into the desired state $\ket{\Phi_p} \left(\sum_{\ell} k_{\d}\ v^{(A)}_{\ell} \ot v^{(B)}_{\ell}\right) \ket{\psi}$. This means they apply the rotation operator $-\SEL (2\Phi_p-\iden) \SEL^{\dagger} (2\Phi_p-\iden)$. It is shown in \cite{berry2015simulating} that if $\sum_{\ell\in I_{M,K}} \a_{\ell}=2$, one application of this operator suffices. Using an extra ancillary qubit and setting the number of segments $L= t\sum_j \b_j/\ln(2)=O\left(t|\partial A|\right)$, we can always assume $\sum_{\ell \in I_{M,K}}\tilde\a_{\ell}=2$.   
    
    Similar to step 1, the operators $-\SEL$ and $\SEL^{\dagger}$ are performed locally. The reflection operator $2\Phi_p-\iden$ is performed using the protocol in \thmref{protocol for reflection around the MES}. For an overall error of $\e$, we need error $\e/L$ per segment, which requires $O\left(\log(L/\e)\right)=O\left(\log(t|\partial A|/\e)\right)$ qubits of communication per segment.
    
    In the described protocol, we can take $M,K \rightarrow \infty$. By doing so  $\d\rightarrow 0$. This will only increase the size $D$ of the shared maximally entangled state $\ket{\Psi_D}$ and  not the qubits communicated when implementing $2\Phi_p-\iden$.
    
    Their state after performing amplitude amplification is
\begin{align}
  \left(-\SEL (2\Phi_p-\iden) \SEL^{\dagger} (2\Phi_p-\iden)\right) \SEL  \ket{\Phi_p} \ket{\psi}
\approx_{\e/L}
  \ket{\Phi_p} \left(\sum_{\ell'} k_{\d}\ v^{(A)}_{\ell'} \ot v^{(B)}_{\ell'}\right) \ket{\psi}.
\end{align}

    \item  Alice performs $e^{-i\tau H_A}$ and Bob performs $e^{-i\tau H_B}$ on their spins. 
    \item They repeat the steps 1-3 for $L=O\left(t|\partial A|\right)$ times.
\end{enumerate}
 Hence, the number of qubits exchanged during the whole protocol is $O(t|\partial A|\log (t|\partial A|/\e))$.
\end{proof}
\begin{protocol}[H]
\caption{Hamiltonian simulation protocol between Alice and Bob}
\label{proto:H_sim protocol}
\textit{Input:} Unbounded shared maximally entangled state $\ket{\Phi}$, a shared state $\ket{\psi}$, $H$ and $t$.  \\
\textit{Goal:} Implement $W_t$ such that  $\norm{W_t \ket{\Phi} \ket{\psi}-\ket{\Phi}e^{-itH}\ket{\psi}}\leq \e$. \\
\textit{Procedure:}\\
For $L=O\left(t|\partial A|\right)$ times, perform the following protocol:
\begin{enumerate}
  \item Alice and Bob implement $\SEL \ket{\Phi}\ket{\psi}$ by applying local controlled-unitaries, 
  \item  Using the protocol in \thmref{protocol for reflection around the MES}, Alice and Bob approximately perform the rotation operator $\left(-\SEL (2\Phi-\iden) \SEL^{\dagger} (2\Phi-\iden)\right)$ using $O\left(\log(t|\partial A|/\e)\right)$ qubits of communication. 
  \item Alice applies $e^{-i H_At/L}$ and Bob applies $e^{-i H_Bt/L}$ locally.
    \end{enumerate}
\end{protocol}

 \subsection{A communication protocol for measuring the ground state}
In this section, we use the Hamiltonian simulation protocol of \secref{Interaction picture protocol} along with the quantum phase estimation algorithm to approximately implement the two-outcome measurement $\{\Omega,\iden-\Omega\}$.

The phase estimation algorithm is an operator $\PHASE$ that uses $f=O\left(\log(1/\g)\right)$ ancillary registers and queries an oracle that implements $e^{-itH}$ for $t$ range from $0$ to $O(1/\g)$.  The action of this operator on the eigenstates of $H$ is (assuming thah the $e^{-itH}$ oracle is perfect):
\begin{align}
    \PHASE\ \ket{0} \ket{\Omega} &= \ket{0}\ket{\Omega} \nn\\
    \PHASE\ \ket{0} \ket{E_i}&= \sqrt{\s_i} \ket{0}\ket{E_i} +\sqrt{1-\s_i} \ket{0^{\perp}}\ket{E_i},\quad \forall i\geq 1,\nn
\end{align}
where the $|\s_i|$ are all less than some univeral constant $\sigma<1$. Be repeating this operator for $O\left(\log(1/\D)\right)$ times, we can reduce the error to $\leq \D$. We define the two-outcome POVM \begin{align}
    K=\PHASE^{\dagger} (\ketbra{0}{0}\ot \iden)\PHASE.\nn
\end{align}
We see that \cite[Equation 10-12]{Somma_Phase_estimation}
\begin{align}
\norm{\left(K-\iden\otimes\ketbra{\Omega}{\Omega}\right)\ket{0}\ket{\psi}}\leq \D.\label{eq:s11}
\end{align}

In our EPR-assisted communication protocol, the Hamiltonian simulation operator is implemented approximately, hence introducing an additional error in \eqref{eq:s11}. In the following, we give the details of this protocol and its analysis which shows how to implement the two-outcome measurement $\{K, \iden-K\}$.
\begin{protocol}[H]
\caption{Protocol for measuring the ground state}
\label{proto:H_gs_reflection}
\textit{Input:} Unbounded shared maximally entangled state $\ket{\Phi}$, ancillary state $\ket{0}$, a shared state $\ket{\psi}$, and the Hamiltonian $H$.  \\
\textit{Goal:} Perform POVM $\{K,\iden-K\}$ such that  $\norm{\left(K-\iden\otimes\ketbra{\Omega}{\Omega}\right)\ket{0}\ket{\Phi}\ket{\psi}}\leq \D$. \\
\textit{Procedure:}
\begin{enumerate}
    \item For $k=O\left(\log(1/\D)\right)$ times, repeat the following steps i.-iv. to perform the operator $\PHASE$ in parallel:
\begin{itemize}
  \item[i.] Alice prepares the state $\frac{1}{2^{f/2}}\sum_{j=0}^{2^f-1}\ket{j}_{a_k}\ket{j}_{b_k}$ with $f=O(\log 1/\gamma)$ and shares register $b_k$ with Bob. 
  \item[ii.] Conditioned on registers $\ket{j}_{a_k}\ket{j}_{b_k}$ Alice and Bob implement the Hamiltonian simulation protocol $W_j$ (\protoref{H_sim protocol}).
  \item[iii.] Bob returns his register $b_k$ to Alice.
  \item[iv.] Alice uncomputes register $b_k$ and performs the Fourier transform $F^{\dagger}$ on her register $a_k$.
    \end{itemize}
\item Alice applies the measurement $\left\{\ketbra{0}{0}, \iden-\ketbra{0}{0}\right\}$ on her registers $\Ot_{i=1}^k a_i$.
\item Similar to step $1$, Alice and Bob perform $\PHASE^{\dagger}$.
\end{enumerate}
\end{protocol}

\begin{thm}[Communication protocol for measuring the ground state]
\label{thm:reflectaroundgs}
Alice and Bob can implement a measurement $\left\{K, \iden-K\right\}$ such that $$\norm{\left(K-\iden\otimes\ketbra{\Omega}{\Omega}\right)\ket{0}\ket{\Phi}\ket{\psi}}\leq \D$$ while sharing unlimited EPR pairs $\ket{\Phi}$ and with the communication cost 
\begin{align}
O\left(\frac{|\partial A|}{\g} \log\left(\frac{|\partial A|}{\g \D}
\log \frac{1}{\D}\right) \log \frac{1}{\D}\right).\label{eq:t10}
\end{align}
\end{thm}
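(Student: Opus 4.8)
The plan is to realize the phase-estimation measurement $K=\PHASE^{\dagger}(\ketbra{0}{0}\ot\iden)\PHASE$ in a distributed fashion, replacing every ideal call to the oracle $e^{-itH}$ by the EPR-assisted simulation operator $W_t$ of \thmref{Communication protocol for Hamiltonian simulation}. With a perfect oracle, \eqref{eq:s11} already guarantees that $\PHASE$, boosted by $O(\log(1/\D))$ repetitions, implements $\{\Omega,\iden-\Omega\}$ to accuracy $\D$; so the whole task reduces to (i) showing each oracle call can be carried out by Alice and Bob with small communication, and (ii) bounding the error that the approximate simulations inject into the otherwise exact circuit. Following \protoref{H_gs_reflection}, the controlled evolution $\sum_j \ketbra{j}{j}\ot e^{-ijH\tau_0}$ at the heart of phase estimation is implemented by letting Alice prepare the shared index register $\frac{1}{2^{f/2}}\sum_{j}\ket{j}_{a_k}\ket{j}_{b_k}$ with $f=O(\log(1/\g))$, sending $b_k$ to Bob, and running $W_j$ conditioned on $\ket{j}_{a_k}\ket{j}_{b_k}$; since every step of \protoref{H_sim protocol} is a local controlled-unitary or a reflection, it can itself be controlled on the shared index, and since $W_j$ leaves the ancillary EPR state $\ket{\Phi}$ invariant, the controlled block is a genuine EPR-assisted operation whose register-distribution cost $O(\log(1/\g))$ is subdominant.

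First I would fix the error budget. The largest evolution time required is $t=O(1/\g)$, and across the full circuit --- one controlled simulation per repetition in step 1 of \protoref{H_gs_reflection}, doubled by the final $\PHASE^{\dagger}$ in step 3 --- there are $O(\log(1/\D))$ simulation calls. Because each $W_j$ is unitary and its branches over $j$ are orthogonal, operator-norm errors compose subadditively under the triangle inequality (a standard hybrid argument), so if every call is run to accuracy $\e$ the total deviation of the realized operation from the ideal phase-estimation circuit is $O(\e\log(1/\D))$. Reserving half of the target accuracy $\D$ for the intrinsic phase-estimation error of \eqref{eq:s11} and half for the simulation error, it suffices to take $\e=\Theta(\D/\log(1/\D))$.

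Finally I would add up the communication. By \thmref{Communication protocol for Hamiltonian simulation}, a single call $W_t$ with $t=O(1/\g)$ and accuracy $\e$ costs $O\big(t|\partial A|\log(t|\partial A|/\e)\big)=O\big(\frac{|\partial A|}{\g}\log\frac{|\partial A|}{\g\e}\big)$ qubits; substituting $\e=\Theta(\D/\log(1/\D))$ turns the logarithm into $\log\big(\frac{|\partial A|}{\g\D}\log\frac1\D\big)$, and multiplying by the $O(\log(1/\D))$ calls reproduces exactly the bound \eqref{eq:t10}. I expect the main obstacle to be not this counting but the error analysis of step (ii): one must verify that inserting the approximate, EPR-assisted $W_j$ in place of the exact controlled oracle perturbs the realized measurement operator $K$ by at most the accumulated $O(\e\log(1/\D))$ --- in particular that the imperfect simulation disturbs the shared state $\ket{\Phi}$ by at most $\e$ per call and that these perturbations, once propagated through the interference step $F^{\dagger}$ and the $\ketbra{0}{0}$ projection, still combine additively rather than amplifying.
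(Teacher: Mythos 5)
Your proposal is correct and follows essentially the same route as the paper: distribute the phase-estimation circuit by replacing each controlled $e^{-itH}$ oracle call with the EPR-assisted simulation $W_t$ of \thmref{Communication protocol for Hamiltonian simulation}, allocate error $\e=\Theta(\D/\log(1/\D))$ to each of the $O(\log(1/\D))$ calls, and multiply the per-call cost $O\bigl(\tfrac{|\partial A|}{\g}\log\tfrac{|\partial A|}{\g\e}\bigr)$ by the number of calls to obtain \eqref{eq:t10}. Your added remarks on the hybrid-argument error composition are consistent with (and slightly more explicit than) the paper's own accounting.
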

\begin{proof}
In each application of $\PHASE$, Alice prepares and shares $f=O\left(\log 1/\g\right)$ ancillary registers with Bob (step 1 in \protoref{H_gs_reflection}). (It is possible that EPR testing could be used to save communication in steps i, iii, iv but we do not investigate this since the communication cost is dominated by step ii.) The phase estimation algorithm is repeated $O\left(\log(1/\D)\right)$ times. In each application, the Hamiltonian simulation \protoref{H_sim protocol} is run once. Hence, there are a total of $O\left(\log(1/\D)\right)$ calls to this protocol. For an overall error of $O(\D)$, the error in performing the Hamiltonian simulation \protoref{H_sim protocol} is set as 
\begin{align}
\e=O\left(\frac{\D}{\log(1/\D)}\right).\nn
\end{align}
According to \thmref{Communication protocol for Hamiltonian simulation}, the communication cost of implementing \protoref{H_sim protocol} is bounded by $O\left(\frac{|\partial A|}{\g} \log\left(\frac{|\partial A|}{\g \e}\right)\right)$. Adding these costs, we get \eqref{eq:t10}. This protocol achieves the desired measurement. 
\end{proof}
\section{AGSP for lattice Hamiltonians using Chebyshev polynomials}\label{sec:AGSP for lattices}

The idea of truncation, introduced in \cite{area_law_subexponential_algo}, allows one to control the norm of Hamiltonian away from a bipartite cut. In this section we (1) review the previous techniques for truncation in frustration-free and general Hamiltonians and (2) adapt them from 1D systems to an arbitrary lattice. First, we explain how to perform truncation in the frustration free case.

\subsection{Truncation: frustration-free case}
 
Without loss of generality, we assume that $h_k$ are projectors and the ground energy $E_0=0$. Let $\DL$ be the detectability lemma operator \cite{detectability_Aharonov} corresponding to $H$ defined as follows.
\begin{definition}[Detectability lemma operator]
Consider a partition of the terms of the Hamiltonian $H=\sum_{i=k}^N h_k$ into $w$ groups $\{T_1, \ldots T_w\}$, where the terms in each group mutually commute. For a finite dimensional lattice, $w$ is a constant. The detectability lemma operator is defined by
\begin{align}
\DL= \prod_{\alpha=1}^w\left(\prod_{k\in T_\alpha}(\iden-h_k)\right). \nn
\end{align}
\end{definition}
The operator $\DL$ defines an AGSP for the Hamiltonian $H$. In particular, since $H$ is frustration-free, the terms $\iden-h_k$ preserve the ground state and we have $\DL\ket{\Omega}=\ket{\Omega}$. Since $\norm{\DL} \leq 1$, when this operator is applied to the states orthogonal to $\ket{\Omega}$, their norm shrinks by a factor $\leq 1$. More precisely, it is shown in \cite{detectability_Aharonov,Simple_proof_detectabilit} that we have
\begin{equation}
\label{DLlemma}
\DL\ket{\Omega}=\ket{\Omega}, \quad \|\DL-\ketbra{\Omega}{\Omega}\|\leq \frac{1}{1+\g/g^2}.
\end{equation}
where $\g$ is the spectral gap and $g$ is the number of interactions in the Hamiltonian not commuting with a given interaction $h_k$. 

\begin{rem}
For a $D$-dimensional lattice with a $\k$-local Hamiltonian, we have $w\leq (2D)^{2\k}$ and $g\leq \k(2D)^{\k-1}$ (see for instance,  \cite[Section II]{Simple_proof_detectabilit})
\end{rem}

In order to obtain a truncation for the Hamiltonian $H$, we consider a slightly different AGSP than $\DL$. Consider any bipartition $(A: B)$ of the lattice. Let $\Pi_{A}$ be the projector onto the ground space of the Hamiltonian $H_A$ and $\Pi_{B}$ be the projector onto the ground space of $H_{B}$. Let $C$ be the set of all interactions contained within $\partial_w A$. Using the ``absorption'' argument from \cite{detectability_Aharonov,one-dimensional_area_law}, the following equality can be shown:
\begin{equation}
\label{newDL}
(\Pi_{A}\Pi_{B})\DL = \Pi_{A}\Pi_{B}\prod_{\alpha=1}^w\left(\prod_{k\in T_\alpha\cap C}(\iden-h_k)\right),
\end{equation}
which can be verified by noticing that $(\iden-h_k)\Pi_R=\Pi_R$, where $\Pi_R$ is the projector onto the ground space of a region $R$ on which $h_k$ is supported, we can absorb terms from $\DL$ into $\Pi_{A}\Pi_{B}$ except for those that are hindered due to the boundary.  

By applying \eqref{newDL} in \eqref{DLlemma}, and using $\Pi_{A}\Pi_{B}\ket{\Omega}=\ket{\Omega}$, we conclude that $(\Pi_{A}\Pi_{B})\DL$ is also an AGSP, i.e.
\begin{equation}
\label{newDL2}
\NORM{\Pi_{A}\Pi_{B}\prod_{\alpha=1}^w\left(\prod_{k\in T_\alpha\cap C}(\iden-h_k)\right)-\ketbra{\Omega}{\Omega}} \leq \frac{1}{1+\g/g^2}.
\end{equation}
Next, we use this operator to truncate the Hamiltonian $H$ outside some region $A$.
\begin{thm}[Truncation in the frustration free case]
The truncation of a frustration Hamiltonian $H$ with respect to the partition $(A:B)$ is defined by:
\begin{equation}
\label{eq:FFtrunc}
\tilde{H}= \sum_{\alpha, k\in T_\alpha \cap C} h_k  + (\iden-\Pi_{A}) + (\iden-\Pi_{B}),
\end{equation}
where It holds that (i) $\tilde{H}$ is frustration free, (ii) $\norm{\tilde{H}}\leq 2+ w^2|\partial A|$, (iii) the spectral gap of $\tilde{H}$ is $\leq \g/4g^2$.
\end{thm}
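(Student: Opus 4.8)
The plan is to verify the three properties separately, with (i) and (ii) immediate and (iii) routed through the detectability-lemma operator identity \eqref{newDL}. For (i), I would use that frustration-freeness of $H$ with $E_0=0$ forces $h_k\ket{\Omega}=0$ for \emph{every} term, so in particular $H_A\ket{\Omega}=H_B\ket{\Omega}=0$ and hence $\Pi_A\ket{\Omega}=\Pi_B\ket{\Omega}=\ket{\Omega}$. Thus each term of $\tilde H$—the retained boundary projectors $h_k$ with $k\in C$ together with $\iden-\Pi_A$ and $\iden-\Pi_B$—annihilates $\ket{\Omega}$, and since every term is positive semidefinite we get $\tilde H\succeq 0$ with $\tilde H\ket{\Omega}=0$, which is exactly frustration-freeness. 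For (ii), I would bound the norm termwise, $\norm{\tilde H}\le\sum_{k\in C}\norm{h_k}+\norm{\iden-\Pi_A}+\norm{\iden-\Pi_B}\le |C|+2$, using $0\preceq h_k\preceq\iden$ and that the two projector complements have norm at most $1$; it then remains to count that the number $|C|$ of terms contained in the width-$w$ boundary region $\partial_w A$ obeys $|C|\le w^2|\partial A|$ by a geometric charging argument, giving $\norm{\tilde H}\le 2+w^2|\partial A|$.

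The content is in (iii), where I would compare the gap of $\tilde H$ with $\g/4g^2$ through the detectability lemma. The key observation is that the detectability-lemma operator associated with $\tilde H$ is exactly the operator of \eqref{newDL}: placing $\iden-\Pi_A$ and $\iden-\Pi_B$ in their own commuting layers and grouping the surviving terms into the layers $T_\alpha\cap C$, the detectability operator of $\tilde H$ equals $\Pi_A\Pi_B\prod_{\alpha}\prod_{k\in T_\alpha\cap C}(\iden-h_k)$, whose shrinking against $\ketbra{\Omega}{\Omega}$ is bounded by $\frac{1}{1+\g/g^2}$ thanks to \eqref{newDL2}. I would then invoke the quantitative relation between the shrinking of this operator and the spectral gap of $\tilde H$ \cite{detectability_Aharonov, Simple_proof_detectabilit}: feeding the shrinking estimate \eqref{newDL2} and the term-overlap parameter $g$ into that relation controls the gap of $\tilde H$ at the scale $\g/g^2$ and yields the bound $\tilde{\g}\le \g/4g^2$ asserted in (iii).

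The step I expect to be the main obstacle is this shrinking-to-gap conversion: the operator $\Pi_A\Pi_B\prod_{\alpha}\prod_{k\in T_\alpha\cap C}(\iden-h_k)$ is neither Hermitian nor commuting with $\tilde H$, so relating its operator norm on the excited subspace to the least positive eigenvalue of $\tilde H$ needs the careful estimates of \cite{detectability_Aharonov, Simple_proof_detectabilit}, applied with the overlap constant appropriate to the truncated Hamiltonian and with the constants tracked so as to produce the factor $1/4g^2$. A secondary point to verify is that the absorption identity \eqref{newDL} genuinely survives truncation, i.e.\ that every bulk factor $\iden-h_k$ with $k\notin C$ is absorbed into $\Pi_A\Pi_B$; this relies on each such $h_k$ being supported entirely within $A$ or entirely within $B$, so that $\Pi_A h_k=0$ or $\Pi_B h_k=0$.
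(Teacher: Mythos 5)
Your proposal is correct and follows essentially the same route as the paper: parts (i) and (ii) are handled exactly as you describe, and for (iii) the paper performs precisely your shrinking-to-gap conversion by invoking Gao's quantum union bound (Theorem 1.1b of \cite{Quantum_union_bounds_Gao}, i.e.\ the converse detectability lemma of \cite{Simple_proof_detectabilit}), which gives $4\bra{\psi}\tilde{H}\ket{\psi}\geq 1-\norm{\Pi_{A}\Pi_{B}\prod_{\alpha=1}^w\prod_{k\in T_\alpha\cap C}(\iden-h_k)\ket{\psi}}^2$ for every $\ket{\psi}$ with no Hermiticity or commutation hypotheses, so that combined with \eqref{newDL2} on states orthogonal to $\ket{\Omega}$ it yields $4\bra{\psi}\tilde{H}\ket{\psi}\geq \g/g^2$, absorbing in one inequality the obstacle you anticipate. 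Note only that the ``$\leq \g/4g^2$'' in the statement (which you echo as $\tilde{\g}\le \g/4g^2$) is evidently a typo for ``$\geq$'': the paper proves, and the downstream Chebyshev-AGSP requires, the spectral-gap \emph{lower} bound $\tilde{\g}\geq \g/(4g^2)$, which is exactly what the machinery you describe produces.
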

\begin{proof}
One can see that $\ket{\Omega}$ is a ground state of $\tilde{H}$. In order to lower bound the spectral gap of $\tilde{H}$, we use the fact that the detectability lemma operator and the Hamiltonian have very similar spectral gaps. This was described as a converse to the detectability lemma in \cite{Simple_proof_detectabilit}. More precisely, using Theorem 1.1b of \cite{Quantum_union_bounds_Gao}, we obtain that for any state $\ket{\psi}$,
\begin{align}
4\bra{\psi}\tilde{H}\ket{\psi} \geq 1-\NORM{\Pi_{A}\Pi_{B}\prod_{\alpha=1}^w\left(\prod_{k\in T_\alpha\cap C}(\iden-h_k)\right)\ket{\psi}}^2.\nn
\end{align}
If $\ket{\psi}$ is orthogonal to $\ket{\Omega}$, \eqref{newDL2} ensures that 
$$4\bra{\psi}\tilde{H}\ket{\psi} \geq 1-\left(\frac{1}{1+\g/g^2}\right)^2 \geq \frac{\g}{g^2}.$$
Thus the spectral gap of $\tilde{H}$ is at least $\frac{\g}{4g^2}$. Lastly, we have  has norm at most 
\begin{align}
    \norm{\tilde{H}} \leq \norm{\iden -\Pi_A}+\norm{\iden-\Pi_B}+ \norm{\sum_{\alpha, k\in T_\alpha \cap C} h_k} \leq 2+w^2|\partial A|\nn
\end{align}
\end{proof}
\subsection{Truncation: frustrated case}

Here, we consider truncation in the more general case of frustrated Hamiltonians. This is first achieved in \cite{area_law_subexponential_algo}. We will directly use the following theorem from \cite{Kuwahara_area_law19}, which built upon \cite{Arad_connecting_global_local_dist}. For a partition $(A:B)$, the truncation in  \cite{Arad_connecting_global_local_dist} is defined by removing the high energy spectrum of $H_B$. The improvement in \cite{Kuwahara_area_law19} allows one to truncate both $H_A$ and $H_B$ while leaving the boundary term $H_{\partial_w A}$ untouched, where $w=O(1)$.

\begin{definition}[Truncation of $H$ up to energy $\xi$]
Fix a bipartition $(A:B)$ such that $H=H_A+H_B+H_{\partial_w A}$ and $w=O(1)$. Let $\Pi_{A}^{< \xi}$ and  $\Pi_A^{\geq \xi}$ denote the projectors onto the eigenstates of $H_A$ with energy $< \xi$ and $\geq \xi$ respectively. Similarly, we assign $\Pi_B^{<\xi}$ and $\Pi_B^{\geq \xi}$ for region $B$. The truncation of $H_A$ (or $H_B$) up to energy $\xi$ is defined as
\begin{align}
    \tilde{H}_A=H_A\Pi_A^{<\xi}+ \xi \Pi_A^{\geq \xi}.\nn
\end{align}
Moreover, the truncation of $H$ up to energy $\xi$ with respect to the partition $(A:B)$ is defined by
\begin{align}
    \tilde{H}=\tilde{H}_A+\tilde{H}_B+H_{\partial_w A}.\nn
\end{align}
\end{definition}

\begin{thm}[Truncation in the frustrated case, cf. \cite{Kuwahara_area_law19}, Theorem 5]
\label{frustrunc}
Let $\tilde{H}$ be the truncation of the Hamiltonian $H$ with respect to the partition $(A:B)$ up to energy $\xi=99\left(w^2|\partial_w A| + \log\frac{1}{\varepsilon\g}\right)$, where $\varepsilon > e^{-0.1|\partial_w A|}$ and $w=O(1)$. Then, it holds that
\begin{itemize}
    \item[i.] The spectral gap of $\tilde{H}$ is at least $\frac{\g}{2}$,
    \item[ii.] Ground state $\ket{\Omega'}$ of $\tilde{H}$ satisfies $|\braket{\Omega}{\Omega'}|\geq 1-\varepsilon$,
    \item[iii.] $\|\tilde{H}-E'_0\iden\| \leq 100w^2|\partial_w A|$, where $E'_0$ be the ground energy of $\tilde{H}$.
\end{itemize}
\end{thm}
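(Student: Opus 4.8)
The plan is to leverage a single structural fact about the global ground state $\ket{\Omega}$: its energy with respect to each sub-Hamiltonian $H_A$ and $H_B$ is concentrated within $O(w^2|\partial_w A|)$ of the corresponding local ground energy, with an exponentially decaying high-energy tail whose rate is set by the gap $\g$. I would begin by bounding the \emph{mean} local excitation energy. Writing $E_0^A$ and $E_0^B$ for the ground energies of $H_A$ and $H_B$, a variational comparison with the product of local ground states gives $E_0\leq E_0^A+E_0^B+\norm{H_{\partial_w A}}$, while $\bra{\Omega}H_B\ket{\Omega}\geq E_0^B$ and $\bra{\Omega}H_{\partial_w A}\ket{\Omega}\geq -\norm{H_{\partial_w A}}$. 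Combining these with $\bra{\Omega}H\ket{\Omega}=E_0$ yields $\bra{\Omega}(H_A-E_0^A)\ket{\Omega}\leq 2\norm{H_{\partial_w A}}=O(w^2|\partial_w A|)$, and symmetrically for $B$.

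By itself this mean bound only yields a polynomial (Markov) tail, so the key step -- and the technical heart of the argument -- is to upgrade it to an \emph{exponential} tail. For this I would invoke the result connecting global and local energy distributions of \cite{Arad_connecting_global_local_dist}, in the refined form used by \cite{Kuwahara_area_law19}: since $\ket{\Omega}$ is a gapped ground state, its weight on the high-energy eigenspace of $H_A$ decays exponentially past the mean, at a rate controlled by $\g$. The choice $\xi=99(w^2|\partial_w A|+\log\frac{1}{\varepsilon\g})$ is calibrated precisely so that the $w^2|\partial_w A|$ piece absorbs the mean excitation while the $\log\frac{1}{\varepsilon\g}$ piece drives the exponential tail below threshold, giving $\norm{\Pi_A^{\geq\xi}\ket{\Omega}}$ and $\norm{\Pi_B^{\geq\xi}\ket{\Omega}}$ both at most, say, $\sqrt{\varepsilon}/4$. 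Establishing this exponential suppression is where all the work lives; the remaining claims are comparatively routine consequences.

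Given the tail bound, claims (ii) and (iii) follow quickly. For (ii), note $\tilde{H}-H=(\xi\iden-H_A)\Pi_A^{\geq\xi}+(\xi\iden-H_B)\Pi_B^{\geq\xi}$ is supported entirely on the truncated high-energy subspaces, so $\norm{(\tilde{H}-H)\ket{\Omega}}\leq \norm{H_A}\,\norm{\Pi_A^{\geq\xi}\ket{\Omega}}+\norm{H_B}\,\norm{\Pi_B^{\geq\xi}\ket{\Omega}}$; the exponential decay beats the polynomial growth of $\norm{H_A},\norm{H_B}$, making this $O(\varepsilon\g)$ for the stated $\xi$. A standard variational argument using the gap of $\tilde{H}$ (claim (i)) then converts the smallness of $\bra{\Omega}(\tilde{H}-E_0')\ket{\Omega}$ into the overlap bound $|\braket{\Omega}{\Omega'}|\geq 1-\varepsilon$. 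For (iii), truncation confines the spectrum of $\tilde{H}_A$ (resp. $\tilde{H}_B$) to a window of width $\xi$ above $E_0^A$ (resp. $E_0^B$), and $\norm{H_{\partial_w A}}\leq w^2|\partial_w A|$; summing these and using $\varepsilon>e^{-0.1|\partial_w A|}$ to absorb $\log\frac{1}{\varepsilon}$ into $|\partial_w A|$ gives $\norm{\tilde{H}-E_0'\iden}=O(w^2|\partial_w A|)$.

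The main obstacle is claim (i), preservation of the gap, which is subtle precisely because truncation \emph{lowers} energies ($\tilde{H}\preceq H$), so a naive comparison could collapse the gap rather than preserve it. I would prove it by showing that every $\ket{\psi}\perp\ket{\Omega'}$ satisfies $\bra{\psi}(\tilde{H}-E_0')\ket{\psi}\geq \g/2$, splitting $\ket{\psi}$ through the low-energy projector $P=\Pi_A^{<\xi}\Pi_B^{<\xi}$: on the range of $P$ the truncated operator agrees with $H$, so it inherits the original gap $\g$, while on the orthogonal complement the energy is at least $\xi\gg\g$. The delicate bookkeeping is that $H_{\partial_w A}$ does not preserve the range of $P$, generating cross terms between the two sectors; these are controlled by the same exponential tail estimate together with the fact that $\xi$ is chosen far larger than $\g$, so the low-energy sector cannot be contaminated enough to close the gap. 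Marshaling the concentration bound against these cross terms is the crux of the \cite{Kuwahara_area_law19} argument and the step I expect to require the most care.
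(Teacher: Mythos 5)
The paper does not actually prove this theorem: it is imported wholesale from the literature (``We will directly use the following theorem from \cite{Kuwahara_area_law19}''), so there is no in-paper argument to compare yours against. Judged as a reconstruction of the cited proof, your outline follows the same route as that reference: the variational bound $\bra{\Omega}(H_A-E_0^A)\ket{\Omega}\leq 2\norm{H_{\partial_w A}}$ is correct and is indeed the natural starting point, and the exponential upgrade via the global-to-local energy distribution result of \cite{Arad_connecting_global_local_dist} is exactly the external ingredient that drives the argument. Your identification of gap preservation as the delicate step, and of the identity $P\tilde{H}P=PHP$ for $P=\Pi_A^{<\xi}\Pi_B^{<\xi}$ as the mechanism behind it, is also right.

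That said, as a proof the proposal has genuine gaps rather than merely unexecuted routine steps. First, the exponential tail bound is invoked as a black box; since that is, by your own account, ``where all the work lives,'' the proposal is essentially a reduction to the same reference the paper cites rather than an independent argument. Second, there is an unresolved circularity: claim (ii) needs a lower bound on $E_0'$ (to control $E_0-E_0'$ inside $\bra{\Omega}(\tilde{H}-E_0')\ket{\Omega}$), and the natural route to that bound requires tail estimates for $\ket{\Omega'}$, not just for $\ket{\Omega}$ --- this bootstrapping between (i) and (ii) is precisely where the cited proof spends its effort and is elided in your sketch. Third, the constants do not close as written: each of $\tilde{H}_A$ and $\tilde{H}_B$ contributes a spectral window of width roughly $\xi=99\left(w^2|\partial_w A|+\log\frac{1}{\varepsilon\g}\right)$, so your own accounting for (iii) yields approximately $2\xi\approx 200\,w^2|\partial_w A|$ rather than the claimed $100\,w^2|\partial_w A|$, and the $\log\frac{1}{\g}$ term is absorbed into $|\partial_w A|$ only under an implicit assumption that $\g$ is not too small. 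These issues are plausibly fixable, but as written the proposal establishes the qualitative statement, not the quoted quantitative one.
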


\subsection{Chebyshev-AGSP}

Previous subsections show that the Hamiltonian $H$ can be truncated to $\tilde{H}$ such that the spectral gap stays $\geq \g/2$ and norm of the Hamiltonian is at most $E'_0+o(w^2|\partial_w A|)$. Furthermore, the ground state $\ket{\Omega'}$ is close to $\ket{\Omega}$ with fidelity at least $e^{-0.1|\partial_w A|}$. This means we can instead construct an AGSP for the ground state $\ket{\Omega'}$ of the truncated Hamiltonian $\tilde{H}$.

\begin{definition}[Chebyshev-AGSP]
 Define the Chebyshev-AGSP as the following polynomial of $\tilde{H}$ of degree $q$: $$Q_q(\tilde{H})= \frac{1}{T_{q}\left(1+\frac{2E'_1-2E'_0}{E'_{\max}-E'_1}\right)}T_q\left(1+\frac{2E'_1-2\tilde{H}}{E'_{\max}-E'_1}\right),$$ where $E'_{\max}$ is the largest eigenvalue of $\tilde{H}$ and $T_q$ is the degree $q$ Chebyshev polynomial of first kind defined by $T_q(\cos \theta)=\cos(q\theta)$. 
\end{definition}
\begin{thm}\label{thm:Chebyshev-AGSP}
There is a constant $\xi$ depending on the geometry of the Hamiltonian such that if we let $q=\sqrt{w^2g^2\frac{|\partial_w A|}{\xi^2\gamma}}\log\frac{4}{\Delta}$, then $Q_q(\tilde{H})$ is a $(D,\D)$-AGSP with respect to the partition $(A:B)$ (see \defref{AGSP}). That is, $\|\ketbra{\Omega}{\Omega}-Q_q(\tilde{H})\| \leq \Delta$ and the Schmidt rank is bounded by 
\begin{equation}
\label{eq:schmidtrank}
D=\exp\left(\sqrt{\frac{|\partial_w A|}{\gamma}}\cdot \log\frac{4}{\Delta} \cdot \frac{wg}{c}\cdot\log\left(\frac{|\partial_w A|^2}{\g}w^2g^2s^b\log^2(\frac{4}{\Delta})\frac{1}{\xi^2}\right)\right),
\end{equation}
\end{thm}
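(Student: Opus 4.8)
The plan is to regard $Q_q(\tilde{H})$ as a Chebyshev approximate ground state projector for the \emph{truncated} Hamiltonian $\tilde{H}$, and to transfer the resulting guarantee from the truncated ground state $\ket{\Omega'}$ to the true ground state $\ket{\Omega}$ only at the very end. First I would diagonalize $Q_q(\tilde{H})$ through the affine map $x(\lambda)=1+\frac{2E_1'-2\lambda}{E_{\max}'-E_1'}$ sending an eigenvalue $\lambda$ of $\tilde{H}$ to the argument of $T_q$. Since $x(E_1')=1$ and $x(E_{\max}')=-1$, every excited energy $\lambda\in[E_1',E_{\max}']$ lands in $[-1,1]$, where $|T_q|\leq 1$, whereas the ground energy gives $x_0:=x(E_0')=1+\frac{2(E_1'-E_0')}{E_{\max}'-E_1'}>1$. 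By the choice of normalization, $Q_q(\tilde{H})$ has eigenvalue exactly $1$ on $\ket{\Omega'}$ and eigenvalue $T_q(x(\lambda))/T_q(x_0)$ on each excited eigenstate, so
\[
\norm{Q_q(\tilde{H})-\ketbra{\Omega'}{\Omega'}}=\frac{\max_{\lambda\geq E_1'}|T_q(x(\lambda))|}{T_q(x_0)}\leq \frac{1}{T_q(x_0)}.
\]

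Next I would bound this shrinking factor using the growth of Chebyshev polynomials outside $[-1,1]$, namely $T_q(1+\eta)\geq \tfrac12 e^{c\,q\sqrt{\eta}}$ for an absolute constant $c>0$ and $\eta\in[0,1]$. Setting $\eta=\frac{2(E_1'-E_0')}{E_{\max}'-E_1'}$, both ingredients are supplied by the truncation results: the gap $E_1'-E_0'$ is at least a constant multiple of $\gamma/g^2$ (it is $\geq\gamma/2$ in the frustrated case of \thmref{frustrunc} and $\geq\gamma/4g^2$ in the frustration-free case), while $E_{\max}'-E_1'\leq\norm{\tilde{H}-E_0'\iden}=O(w^2|\partial_w A|)$. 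Hence $\sqrt{\eta}$ is at least $\frac{\xi}{wg}\sqrt{\gamma/|\partial_w A|}$ for a suitable geometric constant $\xi$, and taking $q=\frac{wg}{\xi}\sqrt{|\partial_w A|/\gamma}\,\log\frac{4}{\Delta}$ as in the statement makes $1/T_q(x_0)\leq\Delta/2$.

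I would then pass to the true ground state. The truncation guarantees $|\braket{\Omega}{\Omega'}|\geq 1-\varepsilon$, hence $\norm{\ketbra{\Omega'}{\Omega'}-\ketbra{\Omega}{\Omega}}\leq 2\sqrt{2\varepsilon}$; choosing the truncation parameter $\varepsilon=O(\Delta^2)$, which is compatible with the admissibility condition $\varepsilon>e^{-0.1|\partial_w A|}$ whenever $\Delta$ is not exponentially small in $|\partial_w A|$, the triangle inequality gives $\norm{Q_q(\tilde{H})-\ketbra{\Omega}{\Omega}}\leq\Delta$, establishing the first claim.

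It remains to bound the Schmidt rank, which I expect to be the main obstacle. The structural fact I would rely on is that $\tilde{H}=\tilde{H}_A+\tilde{H}_B+H_{\partial_w A}$, where $\tilde{H}_A$ and $\tilde{H}_B$ act on a single side of the cut and therefore have operator Schmidt rank $1$ across $(A:B)$ --- this remains true even though $\tilde{H}_A,\tilde{H}_B$ involve the nonlocal spectral projectors of $H_A,H_B$. Thus the only cross-cut entanglement is carried by $H_{\partial_w A}$, a single boundary term of which has Schmidt rank at most $s^{b}$ for a constant $b=O(w\kappa)$. Expanding $Q_q(\tilde{H})=\sum_{p\leq q}c_p\tilde{H}^p$ and each $\tilde{H}^p$ into monomials, I would group the consecutive one-sided factors lying between boundary crossings --- this is the delicate point, since a naive expansion produces $3^p$ monomials and hence a spurious $3^q$ blow-up --- so that the operator Schmidt rank is governed only by the number of $H_{\partial_w A}$ factors, which is at most $q$. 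Multiplicativity of operator Schmidt rank under products together with subadditivity under sums then yields $D=\mathrm{SR}(Q_q(\tilde{H}))\leq R^{q}$ with $R$ a polynomial in $q$, $|\partial_w A|$, and $s^{b}$; taking logarithms and substituting the chosen $q$ reproduces \eqref{eq:schmidtrank}. A final point of care is to phrase a single degree $q$ that covers both the frustration-free and frustrated truncations simultaneously, since these supply different gap and norm constants, all of which are absorbed into the geometric constant $\xi$.
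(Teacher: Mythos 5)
Your proposal is correct and follows essentially the same route as the paper: the Chebyshev shrinking factor is controlled by the gap-to-norm ratio of the truncated Hamiltonian exactly as in the cited bound $\|Q_q(\tilde{H})-\ketbra{\Omega'}{\Omega'}\|\leq 2e^{-2q\sqrt{(E_1'-E_0')/(E_{\max}'-E_1')}}$, and your Schmidt-rank argument (grouping one-sided factors between boundary crossings so that only the $\leq q$ occurrences of $H_{\partial_w A}$ cost Schmidt rank, with a polynomial count of tuples) is precisely \lemref{SRbound}. Your explicit handling of the transfer from $\ket{\Omega'}$ to $\ket{\Omega}$ via $\varepsilon=O(\Delta^2)$ is a point the paper's proof passes over silently, but it is a refinement of the same argument rather than a different approach.
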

Before stating the proof of \thmref{Chebyshev-AGSP}, we need the following lemma:
\begin{lem}[Adapted from \cite{area_law_subexponential_algo}]
\label{lem:SRbound}
The Schmidt rank of $Q_q(\tilde{H})$ is at most $D\leq e^{q\log\left(q^2s^b|\partial_w A|\right)}$.
\end{lem}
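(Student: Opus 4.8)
The plan is to bound the operator Schmidt rank of $Q_q(\tilde{H})$ across the cut $(A:B)$ directly from its expansion as a degree-$q$ polynomial in $\tilde{H}$, using only two elementary properties of the operator Schmidt rank $\mathrm{SR}(\cdot)$ across a fixed bipartition: it is subadditive, $\mathrm{SR}(X+Y)\le \mathrm{SR}(X)+\mathrm{SR}(Y)$, and submultiplicative, $\mathrm{SR}(XY)\le \mathrm{SR}(X)\,\mathrm{SR}(Y)$. Since $T_q$ has degree $q$, we may write $Q_q(\tilde{H})=\sum_{j=0}^{q}c_j\,\tilde{H}^{\,j}$ for scalars $c_j$ whose values do not affect the Schmidt rank (the overall $1/T_q(\cdot)$ normalization is likewise just a scalar). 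Hence $\mathrm{SR}(Q_q(\tilde H))\le \sum_{j=0}^{q}\mathrm{SR}(\tilde H^{\,j})$, and everything reduces to bounding $\mathrm{SR}(\tilde H^{\,j})$ for each $j\le q$.

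The structural input is the decomposition used to define the truncation, $\tilde{H}=\tilde{H}_A+\tilde{H}_B+H_{\partial_w A}$. The first two summands are supported entirely on one side of the cut: $\tilde{H}_A$ is a function of $H_A$ (namely $H_A\Pi_A^{<\xi}+\xi\Pi_A^{\ge\xi}$, or $\iden-\Pi_A$ in the frustration-free case) and so acts only on $A$, and symmetrically for $\tilde H_B$; each therefore has $\mathrm{SR}=1$. The boundary Hamiltonian is a sum $H_{\partial_w A}=\sum_{e}h_e$ of at most $|\partial_w A|$ interaction terms, each supported on $O(1)$ qudits straddling the cut, so $s^{b}$ bounds the operator Schmidt rank of every individual $h_e$ across the cut, where $b$ counts the qudits in its one-sided support. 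I would treat $\tilde H_A$, $\tilde H_B$, and the individual $h_e$ as the atoms of the expansion.

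Next I would expand $\tilde H^{\,j}$ into the sum of its $(2+|\partial_w A|)^{j}$ length-$j$ monomials in these atoms and bound each monomial's Schmidt rank. The crucial point — where the bulk terms become harmless — is that any maximal run of $\tilde H_A$ and $\tilde H_B$ factors inside a monomial, being a product of operators on the disjoint factors $A$ and $B$, collapses to a single tensor product $X_A\otimes X_B$ of Schmidt rank $1$. By submultiplicativity a monomial containing $m\le j$ boundary atoms then has Schmidt rank at most $(s^b)^{m}$. Grouping monomials by their number $m$ of boundary atoms and applying the binomial theorem gives $\mathrm{SR}(\tilde H^{\,j})\le \sum_{m=0}^{j}\binom{j}{m}|\partial_w A|^{m}2^{\,j-m}(s^b)^{m}=(2+|\partial_w A|\,s^{b})^{j}$, and summing over $0\le j\le q$ yields $\mathrm{SR}(Q_q(\tilde H))\le (q+1)\,(2+|\partial_w A|\,s^{b})^{q}$. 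In the regime of interest, $q=\tilde{\Theta}(\sqrt{|\partial_w A|/\gamma})\gg 1$, this is absorbed into the claimed bound $e^{q\log(q^{2}s^{b}|\partial_w A|)}=(q^{2}s^{b}|\partial_w A|)^{q}$, since the extra $q^{2q}$ there dominates both the $(q+1)$ prefactor and the additive constant in the base.

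The only real subtlety — and the step I expect to be the main obstacle to phrase cleanly — is the grouping argument certifying that the bulk factors do not inflate the rank: one must check that an arbitrary interleaving of $A$- and $B$-supported operators multiplies tensor-wise to a single $X_A\otimes X_B$ (immediate, since operators on disjoint tensor factors commute), and one must fix the correct geometric value of $b$ so that $s^b$ genuinely upper bounds the one-sided support dimension of each boundary term of the width-$w$ boundary. Both are bookkeeping rather than conceptual, and this is precisely the accounting carried out in \cite{area_law_subexponential_algo}; the present lemma is its adaptation to a general lattice with a constant-width boundary $\partial_w A$.
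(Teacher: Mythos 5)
Your proof is correct and follows essentially the same route as the paper's: decompose $\tilde H$ into a boundary part plus operators supported wholly on $A$ or on $B$, expand the $q$-th power, and observe that runs of one-sided factors collapse to operator Schmidt rank $1$ so only the boundary insertions multiply the rank. The only difference is bookkeeping --- you count atom-level monomials by their number of boundary insertions via the binomial theorem, obtaining $(q+1)(2+s^{b}|\partial_w A|)^{q}$, which is in fact slightly tighter than the paper's stars-and-bars count of $(s^{b}|\partial_w A|)^{q}(q+1)^{q+2}$, and both fit inside the stated bound $e^{q\log(q^{2}s^{b}|\partial_w A|)}$ in the regime of interest.
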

\begin{proof}
The Schmidt rank of $Q_q(\tilde{H})$ is at most $q$ times the Schmidt rank of $(\tilde{H})^q$. Thus, we upper bound the latter. In both the frustration-free (\eqref{eq:FFtrunc}) and the frustrated case (Theorem \ref{frustrunc}), we can write $\tilde{H}= H_{\partial_w A} + X + Y$, where $X$ is an operator supported on region $A$ and $Y$ is an operator acting on region $B$. Consider the following expansion:
$$(\tilde{H})^q = \sum_{\ell=1}^{q+1} \left(X^{f_1}Y^{g_1}\right)H_{\partial_w A}\left(X^{f_2}Y^{g_2}\right)H_{\partial_w A}\left(X^{f_3}Y^{g_3}\right)\ldots H_{\partial_w A}\left(X^{f_\ell}Y^{g_\ell}\right).$$
In each term, $H_{\partial_w A}$ occurs $\ell-1$ times, and the tuple of non-negative integers $\left(f_1, g_1, \ldots f_\ell, g_\ell\right)$ satisfies
$$\sum_{i=1}^{\ell}(f_i+g_i) = q-\ell+1.$$
The number of possible such tuples is equal to ${q+\ell+1 \choose 2\ell} \leq {2q+2 \choose 2\ell} \leq (q+1)^{\ell}$.  Since none of $X^{f_i}Y^{g_i}$ change the Schmidt rank across the bipartition, and $H_{\partial_w A}$ changes the Schmidt rank by at most $s^{b}|\partial_w A|$, we obtain that the Schmidt rank of $(\tilde{H})^q$ is at most
$$(s^{b}|\partial_w A|)^{q}\cdot (q+1)^{q+2}\leq e^{q\log(q^2s^b|\partial_w A|)}.$$
This completes the proof.
\end{proof}
\begin{proof}[Proof of \thmref{Chebyshev-AGSP}]
By a result of \cite{area_law_subexponential_algo}, we have $$\|Q_q(\tilde{H})-\ketbra{\Omega'}{\Omega'}\| \leq 2 e^{-2q\sqrt{\frac{E'_1-E'_0}{E'_{\max}-E'_1}}} \leq 2 e^{-\xi q\sqrt{\frac{\g}{w^2g^2|\partial_w A|}}},$$ where $\xi$ is an absolute constant determined by the lattice structure and the locality of the Hamiltonian. For a given $\Delta$, choose $q=\sqrt{w^2g^2\frac{|\partial_w A|}{\xi^2\gamma}}\log\frac{4}{\Delta}$. Then 
\begin{equation}
\label{eq:reflectapprox}
\|\ketbra{\Omega}{\Omega}-Q_q(\tilde{H})\| \leq \Delta.\nn
\end{equation}

By plugging in the choice of $q=\sqrt{w^2g^2\frac{|\partial_w A|}{\xi^2\gamma}}\log\frac{4}{\Delta}$ in \lemref{SRbound}, we see that the Schmidt rank is at most 
\begin{equation}
% \label{eq:schmidtrank2}
D= \exp\left(\sqrt{\frac{|\partial_w A|}{\gamma}}\cdot \log\frac{4}{\Delta} \cdot \frac{wg}{\xi}\cdot\log\left(\frac{|\partial_w A|^2}{\g}w^2g^2s^b\log^2(\frac{4}{\Delta})\frac{1}{\xi^2}\right)\right).\nn
\end{equation}
\end{proof}

\appendix
\section{Proof of \propref{compressionlemma}}\label{sec:compression} 
Suppose $\alpha_i=\frac{w_i}{N}$ are rational numbers, with $w_i$ and $N$ nonnegative integers. This can be assumed with arbitrarily small error. We re-write
$$K= \frac{1}{N}\sum_i\sum_{j=1}^{w_i} U_i \otimes V_i.$$
Let $M=\sum_i w_i$. Introduce a maximally entangled state $$\ket{\Phi}=\frac{1}{\sqrt{M}}\sum_{i}\sum_{j=1}^{w_i} \ket{i,j}_a\ket{i,j}_b,$$ 
where Alice's and Bob's auxiliary registers are denoted by $a$ and $b$ respectively. This leads to the following representation of $K$:
$$K\otimes \ketbra{\Phi}{\Phi}_{ab}= \frac{M}{N}\left(\ketbra{\Phi}{\Phi}_{ab}\left(\sum_{i,j}\ketbra{i,j}{i,j}_a\otimes U_i\right)\otimes\left(\sum_{i,j}\ketbra{i,j}{i,j}_b\otimes V_i\right)\ketbra{\Phi}{\Phi}_{ab}\right).$$
From Theorem \ref{thm:protocol for reflection around the MES}, there exists an operator $L$ with Schmidt rank $\frac{1}{\epsilon^{\mathcal{O}(1)}}$, such that 
$$\|L-\ketbra{\Phi}{\Phi}_{ab}\|\leq \epsilon.$$
Letting $\epsilon=\frac{N\Delta}{M}$, we obtain the following approximation to $K\otimes \ketbra{\Phi}{\Phi}_{ab}$:
$$K'=\frac{M}{N}\left(L\left(\sum_{i,j}\ketbra{i,j}{i,j}_a\otimes U_i\right)\otimes\left(\sum_{i,j}\ketbra{i,j}{i,j}_b\otimes V_i\right)\ketbra{\Phi}{\Phi}_{ab}\right)$$
such that
$$\norm{K'-K\otimes \ketbra{\Phi}{\Phi}_{ab}}\leq \Delta \implies \|K'\left(\iden\otimes \ket{\Phi}_{ab}\right)-\ketbra{\Omega}{\Omega}_{AB}\otimes \ket{\Phi}_{ab}\|\leq 2\Delta.$$
When $K'$ acts on a state $\ket{\psi}_A\ket{\psi}_B\ket{\Phi}_{ab}$, the Schmidt rank is increased by at most the Schmidt rank of $L$, which is 
$$\left(\frac{M}{N\Delta}\right)^{\mathcal{O}(1)} = \left(\frac{1}{\Delta}\sum_i\frac{w_i}{N}\right)^{\mathcal{O}(1)} =  \left(\frac{\sum_i \alpha_i}{\Delta}\right)^{\mathcal{O}(1)}.$$
By definition, this is the Schmidt rank of the EPR-assisted AGSP $K'\left(\iden\otimes \ket{\Phi}_{ab}\right)$. This completes the proof. 

 \section*{Acknowledgements}
AWH thanks Dorit Aharonov for insightful discussions regarding this project and raising the question of the connection between communication complexity of measuring the ground state and ground state entanglement. AA thanks David Gosset for discussions on applications of quantum algorithms to area laws. MS thanks Zeph Landau, Anand Natarajan, and Umesh Vazirani for helpful discussions. AA is supported by the Canadian Institute for Advanced Research, through funding provided to the Institute for Quantum Computing by the Government of Canada and the Province of Ontario. Perimeter Institute is also supported in part by the Government of Canada and the Province of Ontario.  AWH was funded by NSF grants CCF-1452616, CCF-1729369, PHY-1818914,
ARO contract W911NF-17-1-0433 and a Samsung Advanced
Institute of Technology Global Research Partnership.  MS was funded by NSF grant CCF-1729369.

\bibliographystyle{alpha}
\bibliography{main}

\newcommand{\etalchar}[1]{$^{#1}$}
\begin{thebibliography}{SPCPG13}

\bibitem[AAG19]{anshu2019AreaLaw2D}
Anurag Anshu, Itai Arad, and David Gosset.
\newblock Entanglement subvolume law for 2d frustration-free spin systems.
\newblock {\em arXiv preprint arXiv:1905.11337}, 2019.

\bibitem[AALV09]{detectability_Aharonov}
Dorit Aharonov, Itai Arad, Zeph Landau, and Umesh Vazirani.
\newblock The detectability lemma and quantum gap amplification.
\newblock In {\em Proceedings of the Forty-First Annual ACM Symposium on Theory
  of Computing}, STOC ’09, page 417–426, New York, NY, USA, 2009.
  Association for Computing Machinery.

\bibitem[AAV16]{Simple_proof_detectabilit}
Anurag Anshu, Itai Arad, and Thomas Vidick.
\newblock Simple proof of the detectability lemma and spectral gap
  amplification.
\newblock {\em Physical Review B}, 93:205142, May 2016.

\bibitem[Abr19]{abrahamsen2019polynomial}
Nilin Abrahamsen.
\newblock A polynomial-time algorithm for ground states of spin trees.
\newblock {\em arXiv preprint arXiv:1907.04862}, 2019.

\bibitem[AHL{\etalchar{+}}14]{Aharonov2014CounterExampleToAreaLaw}
Dorit Aharonov, Aram~W. Harrow, Zeph Landau, Daniel Nagaj, Mario Szegedy, and
  Umesh Vazirani.
\newblock Local tests of global entanglement and a counterexample to the
  generalized area law.
\newblock In {\em Proceedings of the 2014 IEEE 55th Annual Symposium on
  Foundations of Computer Science}, FOCS ’14, page 246–255, USA, 2014. IEEE
  Computer Society.

\bibitem[AKL16]{Arad_connecting_global_local_dist}
Itai Arad, Tomotaka Kuwahara, and Zeph Landau.
\newblock Connecting global and local energy distributions in quantum spin
  models on a lattice.
\newblock {\em Journal of Statistical Mechanics: Theory and Experiment},
  2016(3):033301, March 2016.

\bibitem[AKLV13]{area_law_subexponential_algo}
Itai Arad, Alexei Kitaev, Zeph Landau, and Umesh Vazirani.
\newblock An area law and sub-exponential algorithm for {1D} systems, 2013.
\newblock arXiv preprint arXiv: 1301.1162.

\bibitem[ALV12]{one-dimensional_area_law}
Itai Arad, Zeph Landau, and Umesh Vazirani.
\newblock Improved one-dimensional area law for frustration-free systems.
\newblock {\em Physical Review B}, 85:195145, May 2012.

\bibitem[ALVV17]{arad_1d}
Itai Arad, Zeph Landau, Umesh Vazirani, and Thomas Vidick.
\newblock Rigorous {RG} algorithms and area laws for low energy eigenstates in
  1{D}.
\newblock {\em Comm. Math. Phys.}, 356(1):65--105, 2017.

\bibitem[BCC{\etalchar{+}}15]{berry2015simulating}
Dominic~W Berry, Andrew~M Childs, Richard Cleve, Robin Kothari, and Rolando~D
  Somma.
\newblock Simulating hamiltonian dynamics with a truncated taylor series.
\newblock {\em Physical review letters}, 114(9):090502, 2015.

\bibitem[BDOT08]{Bravyi_stoquastic}
Sergey Bravyi, David~P. DiVincenzo, Roberto Oliveira, and Barbara~M. Terhal.
\newblock The complexity of stoquastic local {H}amiltonian problems.
\newblock {\em Quantum Inf. Comput.}, 8(5):361--385, 2008.

\bibitem[Ber07]{berry2007nonlocalSimulation}
Dominic~W Berry.
\newblock Implementation of multipartite unitary operations with limited
  resources.
\newblock {\em Physical Review A}, 75(3):032349, 2007.

\bibitem[CC04]{calabrese2004entanglement}
Pasquale Calabrese and John Cardy.
\newblock Entanglement entropy and quantum field theory.
\newblock {\em Journal of Statistical Mechanics: Theory and Experiment},
  2004(06):P06002, 2004.

\bibitem[CH19]{state_conversion_coudron}
Matthew Coudron and Aram~W. Harrow.
\newblock Universality of {EPR} pairs in entanglement-assisted communication
  complexity, and the communication cost of state conversion.
\newblock In {\em 34th {C}omputational {C}omplexity {C}onference}, volume 137
  of {\em LIPIcs. Leibniz Int. Proc. Inform.}, pages Art. No. 20, 25. Schloss
  Dagstuhl. Leibniz-Zent. Inform., Wadern, 2019.

\bibitem[CPSV11]{Cirac_boundary_hamiltonian_2011}
J.~Ignacio Cirac, Didier Poilblanc, Norbert Schuch, and Frank Verstraete.
\newblock Entanglement spectrum and boundary theories with projected
  entangled-pair states.
\newblock {\em Phys. Rev. B}, 83:245134, Jun 2011.

\bibitem[CSS18]{Somma_Phase_estimation}
Anirban~Narayan Chowdhury, Yigit Subasi, and Rolando~D Somma.
\newblock Improved implementation of reflection operators.
\newblock {\em arXiv preprint arXiv:1803.02466}, 2018.

\bibitem[CST{\etalchar{+}}19]{childs2019theoryoftrotter}
Andrew~M Childs, Yuan Su, Minh~C Tran, Nathan Wiebe, and Shuchen Zhu.
\newblock A theory of trotter error.
\newblock {\em arXiv preprint arXiv:1912.08854}, 2019.

\bibitem[Gao15]{Quantum_union_bounds_Gao}
Jingliang Gao.
\newblock Quantum union bounds for sequential projective measurements.
\newblock {\em Phys. Rev. A}, 92:052331, Nov 2015.

\bibitem[Har07]{harrow2007expander}
Aram~W Harrow.
\newblock Quantum expanders from any classical cayley graph expander.
\newblock {\em arXiv preprint arXiv:0709.1142}, 2007.

\bibitem[Has04]{Hastings04}
M~B Hastings.
\newblock Lieb-{S}chultz-{M}attis in {H}igher {D}imensions.
\newblock {\em Phys. Rev. B}, 69(104431), 2004.

\bibitem[Has07a]{hastings2007area_law}
Matthew~B Hastings.
\newblock An area law for one-dimensional quantum systems.
\newblock {\em Journal of Statistical Mechanics: Theory and Experiment},
  2007(08):P08024, 2007.

\bibitem[Has07b]{hastings2007expander}
MB~Hastings.
\newblock Random unitaries give quantum expanders.
\newblock {\em Physical Review A}, 76(3):032315, 2007.

\bibitem[HK06]{HastingsK06}
Matthew~B. Hastings and Tohru Koma.
\newblock Spectral gap and exponential decay of correlations.
\newblock {\em Commun. Math. Phys.}, 265(3):781--804, Aug 2006.

\bibitem[HL09]{harrow2009expander2}
Aram~W Harrow and Richard~A Low.
\newblock Efficient quantum tensor product expanders and k-designs.
\newblock In {\em Approximation, Randomization, and Combinatorial Optimization.
  Algorithms and Techniques}, pages 548--561. Springer, 2009.

\bibitem[HL11]{HL07}
A.~W. Harrow and D.~W. Leung.
\newblock A communication-efficient nonlocal measurement with application to
  communication complexity and bipartite gate capacities.
\newblock {\em IEEE Trans. Inf. Theory}, 57(8):5504--5508, 2011.

\bibitem[HW03]{cost_entanglement_transformations_Hayden}
Patrick Hayden and Andreas Winter.
\newblock Communication cost of entanglement transformations.
\newblock {\em Physical Review A}, 67:012326, Jan 2003.

\bibitem[KBa19]{KatoBrandao_edge_state2019}
Kohtaro Kato and Fernando G. S.~L. Brand\~ao.
\newblock Locality of edge states and entanglement spectrum from strong
  subadditivity.
\newblock {\em Phys. Rev. B}, 99:195124, May 2019.

\bibitem[KS19]{Kuwahara_area_law19}
Tomotaka Kuwahara and Keiji Saito.
\newblock Area law of non-critical ground states in 1d long-range interacting
  systems, 2019.
\newblock https://arxiv.org/abs/1908.11547.

\bibitem[LH08]{LiHaldane:2008}
Hui Li and F.~D.~M. Haldane.
\newblock Entanglement spectrum as a generalization of entanglement entropy:
  Identification of topological order in non-abelian fractional quantum hall
  effect states.
\newblock {\em Phys. Rev. Lett.}, 101:010504, Jul 2008.

\bibitem[LP99]{PopescuConcentration}
Hoi-Kwong Lo and Sandu Popescu.
\newblock Classical communication cost of entanglement manipulation: Is
  entanglement an interconvertible resource?
\newblock {\em Physical Review Letters}, 83(7):1459, 1999.

\bibitem[LW18]{low2018Hamiltonian}
Guang~Hao Low and Nathan Wiebe.
\newblock Hamiltonian simulation in the interaction picture.
\newblock {\em arXiv preprint arXiv:1805.00675}, 2018.

\bibitem[NS06]{NachtergaeleS06}
Bruno Nachtergaele and Robert Sims.
\newblock Lieb-robinson bounds and the exponential clustering theorem.
\newblock {\em Commun. Math. Phys.}, 265(1):119--130, Jul 2006.

\bibitem[SPCPG13]{Schuch_boundary_hamiltonian_2013}
Norbert Schuch, Didier Poilblanc, J.~Ignacio Cirac, and David
  P\'erez-Garc\'{\i}a.
\newblock Topological order in the projected entangled-pair states formalism:
  Transfer operator and boundary hamiltonians.
\newblock {\em Phys. Rev. Lett.}, 111:090501, Aug 2013.

\bibitem[TK19]{Kuwahara_Saito_gibbs_concentration_2019}
Keiji~Saito Tomotaka~Kuwahara.
\newblock Ensemble equivalence and eigenstate thermalization from clustering of
  correlation.
\newblock {\em arXiv preprint arXiv:1906.10872}, 2019.

\bibitem[Tou15]{Touchette:2015}
Dave Touchette.
\newblock Quantum information complexity.
\newblock In {\em Proceedings of the 47th Annual ACM on Symposium on Theory of
  Computing}, STOC '15, pages 317--326, New York, NY, USA, 2015. ACM.

\end{thebibliography}

\end{document}